\documentclass[a4paper,11pt]{article}

\usepackage[toc,page]{appendix}

\usepackage{amsthm}

\usepackage{amsmath}
\usepackage{latexsym}
\usepackage{amssymb}
\usepackage{verbatim}
\usepackage{setspace}
\usepackage{qtree}
\usepackage{enumitem}
\usepackage{xcolor}
\usepackage{url}
\usepackage{stmaryrd}

\usepackage{authblk}

\usepackage[margin=1.4in]{geometry}

\usepackage{tikz}
\usetikzlibrary{arrows,decorations.pathmorphing,backgrounds,positioning,fit}






\author{Reijo Jaakkola}
\affil{University of Helsinki and Tampere University, Finland}

\date{}

\begin{document}

\setlength\abovedisplayskip{3pt}
\setlength\belowdisplayskip{3pt}
\title{Ordered fragments of first-order logic}

\theoremstyle{plain}
\newtheorem{theorem}{Theorem}[section]
\newtheorem{lemma}[theorem]{Lemma}
\newtheorem{corollary}[theorem]{Corollary}
\newtheorem{proposition}[theorem]{Proposition}
\theoremstyle{definition}
\newtheorem{definition}[theorem]{Definition}
\newtheorem{remark}[theorem]{Remark}
\newtheorem{example}[theorem]{Example}

\maketitle

\begin{abstract}
\noindent
   Using a recently introduced algebraic framework for classifying fragments of first-order logic, we study the complexity of the satisfiability problem for several ordered fragments of first-order logic, which are obtained from the ordered logic and the fluted logic by modifying some of their syntactical restrictions.
\end{abstract}


\section{Introduction}

Informally speaking, a fragment of first-order logic is called ordered, if the syntax of the fragment restricts permutations of variables (with respect to some ordering of the variables) and the order in which the variables are to be quantified. To give an example of a fragment that belongs to this family of logics, we will mention the so-called fluted logic which has received some attention quite recently \cite{flutedlidiatendera, flutedequality, Voigt2019SeparatenessOV}. Roughly speaking, in fluted logic the order in which the variables are being quantified should be the same as the order in which these variables occur in atomic formulas. To illustrate this restriction, consider the following sentence.
\[\forall v_1 (P(v_1) \to \exists v_2 (R(v_1,v_2) \land \forall v_3 S(v_1,v_2,v_3))).\]
This sentence belongs to the fluted logic, since the variables $v_1,v_2$ and $v_3$ are quantified away in the correct order.

One motivation for the study of ordered fragments of first-order logic comes from the fact that they are orthogonal in expressive power with respect to other well-known fragments of first-order logic, such as the guarded fragments. For instance, the formula $\forall v_1 \exists v_2 \forall v_3 R(v_1,v_2,v_3)$ is clearly ordered, but it expresses a property that is, for example, neither expressible in any of the guarded fragments nor in the two-variable fragment. Thus one should expect that the study of these fragments forces us to come up with novel techniques.

Since the syntax of ordered logics restricts heavily the permutations of variables and the order in which the variables are quantified, their syntax can often be presented naturally in a variable-free way. For instance, if we agree that variables must be quantified away with respect to their indexes, then the expression $\forall \exists R$ corresponds unambiguously to the formula $\forall v_1 \exists v_2 R(v_1,v_2)$. Indeed, the fluted logic was originally discovered by Quine as a by-product of his attempts to present the full syntax of first-order logic using a variable-free syntax \cite{quine60, quine72}. Interestingly, this approach was also adopted in the recent paper \cite{flutedequality}, where the fluted logic was presented using its variable-free syntax.

Recently a research program was introduced in \cite{preprintofthis, preprintofthis2, games19} for classifying fragments of first-order logic within an algebraic framework. In a nutshell, the basic idea is to identify fragments of first-order logic with finite algebraic signatures (for more details, see the next section). The algebraic framework naturally suggests the idea of defining logics with limited permutations, and hence it is well suited for representing various ordered fragments of first-order logic. The purpose of this article is to apply this framework to study how the complexities of the ordered fragments change, if we make small modifications in their syntax.

The first logic that we consider in this article is the so-called ordered logic, which is a fragment of the fluted logic (for a formal definition see section 3). The satisfiability problem for this fragment was proved to be in \textsc{Pspace} in \cite{herzigordered}, which should be compared with the fact that the satisfiability problem for fluted logic is \textsc{Tower}-complete, which was proved in \cite{flutedlidiatendera}. We note that the problem is in fact \textsc{Pspace}-complete, and the proof for the \textsc{Pspace}-hardness will be given in section 5.

Now it is natural to ask whether one could extend the syntax of ordered logic while maintaining the requirement that the complexity of the satisfiability problem remains relatively low, and this is the first question that we attempt to answer using the aforementioned algebraic framework. We will formalize different minimal extension of the ordered logic using additional algebraic operators and study the complexities of the resulting logics. The picture that emerges from our results seems to suggest that even if one modifies the syntax of the ordered logic in a very minimal way, the resulting logics will most likely have much higher complexity. For instance, if we relax even slightly the order in which the variables can be quantified, the resulting logic will have \textsc{NexpTime}-complete satisfiability problem. However, there are also exceptions to this rule, since the complexity of ordered logic with equality turns out to be the same as the complexity of the regular ordered logic.

Motivated by the recent study of one-dimensional guarded fragments conducted in \cite{Kieronski2019OnedimensionalGF}, we will also study the one-dimensional fragments of fluted logic and the ordered logic. Intuitively a logic is called one-dimensional if quantification is limited to applications of blocks of existential (universal) quantifiers such that at most one variable remains free in the quantified formula. Imposing the restriction of one-dimensionality to fluted logic and ordered logic decreases quite considerably the complexity of the underlying logics: the complexity of the one-dimensional fluted logic is \textsc{NexpTime}-complete while the complexity of the one-dimensional ordered logic with equality is \textsc{NP}-complete. In the case of the fluted logic we are able to add some further algebraic operators into its syntax without increasing its complexity.

We will also prove that several natural extensions of the ordered logic and the fluted logic are undecidable. First, for the ordered logic we are able to show that if we allow variables to be quantified in an arbitrary order, then the resulting logic is undecidable. Secondly, we are able to show that if we remove restrictions on how the variables in the atomic formulas can be permuted in the one-dimensional fluted logic, then the resulting logic is undecidable. Finally, in the case of the full fluted logic, we can show that if we relax only slightly the way variables can be permuted and the order in which variables can be quantified, then the resulting logic is undecidable.

\section{Algebraic way of presenting logics}

The purpose of this section is to present the algebraic framework introduced in \cite{preprintofthis, preprintofthis2, games19} for classifying fragments of FO. We will be working with purely relational vocabularies with no constants and function symbols. In addition we will not consider vocabularies with $0$-ary relational symbols. Throughout this article we will use the convention where the domain of a model $\mathfrak{A}$ will be denoted by the set $A$.

Let $A$ be an arbitrary set. As usual, a $k$-\textbf{tuple} over $A$ is an element of $A^k$. Given a non-negative integer $k$, a $k$-ary \textbf{AD-relation} over $A$ is a pair $T = (X,k)$, where $X \subseteq A^k$. Here 'AD' stands for arity-definite. Given a $k$-ary AD-relation $T = (X,k)$ over $A$, we will use $(a_1,...,a_k) \in T$ to denote $(a_1,...,a_k) \in X$. Given an AD-relation $T$, we will use $ar(T)$ to denote its arity. We call $(\varnothing,k)$ the \textbf{empty} $k$-ary AD-relation. We will also write $\bot_k^A := (\varnothing,k)$ to emphasize that the empty relation is over the set $A$. Similarly, we will write $\top_k^A := (A^k,k)$.

Given a set $A$, we will use $\mathrm{AD}(A)$ to denote the set of all AD-relations over $A$. If $T_1,...,T_k \in \mathrm{AD}(A)$, then the tuple $(A,T_1,...,T_k)$ will be called an \textbf{AD-structure} over $A$. A bijection $g:A \to B$ is an \textbf{isomorphism} between AD-structures $(A,T_1,...,T_k)$ and $(B,S_1,...,S_k)$, if for every $1\leq \ell \leq k$ we have that $ar(T_\ell) = ar(S_\ell)$, and $g$ is an ordinary isomorphism between the relational structures $(A,rel(T_1),...,rel(T_k))$ and $(B,rel(S_1),...,rel(S_k))$, where $rel(T)$ denotes the underlying relation of an AD-relation. 

The following definition was introduced in \cite{preprintofthis2}, where it was called arity-regular relation operator.

\begin{definition}
     A $k$-ary \textbf{relation operator} F is a mapping which associates to each set $A$ a function $F^A:\mathrm{AD}(A)^k \to \mathrm{AD}(A)$ and which satisfies the following requirements.
     \begin{enumerate}
         \item The operator $F$ is isomorphism invariant in the sense that if the AD-structures $(A,T_1,...,T_k)$ and $(B,S_1,...,S_k)$ are isomorphic via $g$, then the AD-structures $(A,F^A(T_1,...,T_k))$ and $(B,F^B(S_1,...,S_k))$ are likewise isomorphic via $g$.
         \item There exists a function $\sharp : \mathbb{N}^k \to \mathbb{N}$ so that for every AD-structure $(A,T_1,...,T_k)$ we have that the arity of the AD-relation $F^A(T_1,...,T_k)$ is $\sharp (ar(T_1),...,ar(T_k))$. In other words the arity of the output AD-relation is always determined fully by the sequence of arities of the input AD-relations.
     \end{enumerate}
\end{definition}

Given a set of relation operators $\mathcal{F}$ and a vocabulary $\tau$, we can define a language $\mathrm{GRA}(\mathcal{F})[\tau]$ as follows, where $R\in \tau$ and $F \in \mathcal{F}$, and $\bot$ and $\top$ are $0$-ary relation operators which are defined in the obvious way:
\[\mathcal{T} ::= \bot \mid \top \mid R \mid \mathrm{F}\underbrace{(\mathcal{T},...,\mathcal{T})}_{ar(\mathrm{F}) \text{ times}}.\]
Here 'GRA' stands for general relational algebra. We sometimes use the infix notation instead of the prefix notation, if the infix notation is more conventional. For example, if $\cap$ denotes a relation operator, then instead of writing $\cap (\mathcal{T},\mathcal{P})$, we will write $(\mathcal{T} \cap \mathcal{P})$. Furthermore we will drop the brackets in the case where F is unary operator.

If the underlying vocabulary $\tau$ is clear from context or irrelevant, we will write $\mathrm{GRA}(\mathcal{F})$ instead of $\mathrm{GRA}(\mathcal{F})[\tau]$. The members of $\mathrm{GRA}(\mathcal{F})$ will be referred to as \textbf{terms}. In the case where $\mathcal{F}$ is a finite set $\{F_1,..,F_n\}$, we will use $\mathrm{GRA}(F_1,...,F_n)$ to denote $\mathrm{GRA}(\{F_1,...,F_n\})$.

Given a model $\mathfrak{A}$ of vocabulary $\tau$ and $\mathcal{T} \in \mathrm{GRA}(\mathcal{F})[\tau]$, we define its \textbf{interpretation} $\llbracket \mathcal{T} \rrbracket_\mathfrak{A}$ recursively as follows:
\begin{enumerate}
    \item If $\mathcal{T} = \bot$, then we define $\llbracket \mathcal{T} \rrbracket_\mathfrak{A} = \bot^A := \bot^A_0$. Similarly, if $\mathcal{T} = \top$, then we define $\llbracket \mathcal{T} \rrbracket_\mathfrak{A} = \top^A := \top^A_0$.
    \item If $\mathcal{T} = R \in \tau$, then we define $\llbracket R \rrbracket_\mathfrak{A} = (R^\mathfrak{A},ar(R))$.
    \item If $\mathcal{T} = \mathrm{F}(\mathcal{T}_1,...,\mathcal{T}_k)$, then we define 
    \[\llbracket \mathcal{T} \rrbracket_\mathfrak{A} = \mathrm{F}_A(\llbracket \mathcal{T}_1 \rrbracket_\mathfrak{A},...,\llbracket \mathcal{T}_k \rrbracket_\mathfrak{A})\]
\end{enumerate}
Note that the interpretation of a term over $\mathfrak{A}$ is an AD-relation over $A$. The arity of this AD-relation is called the arity of the term $\mathcal{T}$ and we will denote it by $ar(\mathcal{T})$. Note that by definition the arity of the output relation is independent of the underlying model, which guarantees that $ar(\mathcal{T})$ is well-defined. 

Given two $k$-ary terms $\mathcal{T}$ and $\mathcal{P}$ over the same vocabulary, we say that $\mathcal{T}$ is \textbf{contained} in $\mathcal{P}$, if for every model $\mathfrak{A}$ over $\tau$ and for every $(a_1,...,a_k) \in A^k$ we have that if $(a_1,...,a_k) \in \llbracket \mathcal{T} \rrbracket_\mathfrak{A}$ then $(a_1,...,a_k) \in \llbracket \mathcal{P} \rrbracket_\mathfrak{A}$. We will denote this by $\mathcal{T} \models \mathcal{P}$. If $\mathcal{T}$ is a $0$-ary term and $\mathfrak{A}$ is a model so that $\llbracket \mathcal{T} \rrbracket_\mathfrak{A} = \top_0^A$, then we denote this by $\mathfrak{A} \models \mathcal{T}$. Given a $k$-ary term $\mathcal{T}$, we say that $\mathcal{T}$ is \textbf{satisfiable} if there exists a model $\mathfrak{A}$ so that $\llbracket \mathcal{T} \rrbracket_\mathfrak{A}$ is not the empty relation.

We will conclude this section by showing how we can compare the expressive power of algebras with fragments of $\mathrm{FO}$. Let $k\geq 0$ and consider an FO-formula $\varphi(v_{i_1},...,v_{i_k})$, where $(v_{i_1},...,v_{i_k})$ lists all the free variables of $\varphi$, and $i_1 < ... < i_k$. If $\mathfrak{A}$ is a suitable model, then $\varphi$ defines the following AD-relation over $A$
\[\llbracket \varphi \rrbracket_\mathfrak{A} = (\{(a_1,...,a_k) \mid \mathfrak{A} \models \varphi(a_1,...,a_k)\},k)\]
Given a $k$-ary term $\mathcal{T}$ and FO-formula $\varphi(v_{i_1},...,v_{i_k})$ over the same vocabulary, we say that $\mathcal{T}$ is \textbf{equivalent} with $\varphi$ if for every model $\mathfrak{A}$ we have that $\llbracket \mathcal{T} \rrbracket_\mathfrak{A} = \llbracket \varphi \rrbracket_\mathfrak{A}$.

\begin{definition}
    Let $\mathcal{F}$ be a set of relation operators and let $\mathcal{L} \subseteq \mathrm{FO}$. \begin{enumerate}
        \item We say that $\mathrm{GRA}(\mathcal{F})$ and $\mathcal{L}$ are \textbf{equivalent}, if for every $\mathcal{T} \in \mathrm{GRA}(\mathcal{F})$ there exists an equivalent formula $\varphi \in \mathcal{L}$, and conversely for every formula $\varphi \in \mathcal{L}$ there exists an equivalent term $\mathcal{T} \in \mathrm{GRA}(\mathcal{F})$.
        \item We say that $\mathrm{GRA}(\mathcal{F})$ and $\mathcal{L}$ are \textbf{sententially equivalent}, if for every $0$-ary term $\mathcal{T} \in \mathrm{GRA}(\mathcal{F})$ there exists an equivalent sentence $\varphi \in \mathcal{L}$, and conversely for every sentence $\varphi \in \mathcal{L}$ there exists an equivalent $0$-ary term $\mathcal{T} \in \mathrm{GRA}(\mathcal{F})$.
    \end{enumerate}
\end{definition}

\section{Relevant fragments and complexity results}

The purpose of this section is to define the relevant FO-fragments that we are going to study and to present the main complexity results that we are able to obtain. Through out this section $(X,k)$ and $(Y,\ell)$ are AD-relations over some set $A$.

We are going to start by defining formally the ordered logic $\mathrm{OL}$, which will form the backbone for the rest of fragments studied in this article.

\begin{definition}
    Let $\overline{v}_\omega = (v_1,v_2,...)$ and let $\tau$ be a vocabulary. For every $k \in \mathbb{N}$ we define sets $\mathrm{OL}^k[\tau]$ as follows.
    \begin{enumerate}
        \item Let $R \in \tau$ be an $\ell$-ary relational symbol and consider the prefix
        \[(v_1,...,v_\ell)\]
        of $\overline{v}_\omega$ containing precisely $\ell$-variables. If $k\geq \ell$, then $R(v_1,...,v_\ell) \in \mathrm{OL}^k[\tau]$.
        \item Let $\ell \leq \ell' \leq k$ and suppose that $\varphi \in \mathrm{OL}^\ell[\tau]$ and $\psi \in \mathrm{OL}^{\ell'}[\tau]$. Then $\neg \varphi, (\varphi \land \psi) \in \mathrm{OL}^k[\tau]$.
        \item If $\varphi \in \mathrm{OL}^{k+1}[\tau]$, then $\exists v_{k+1} \varphi \in \mathrm{OL}^k[\tau]$.
    \end{enumerate}
    Finally we define $\mathrm{OL}[\tau] := \bigcup_k \mathrm{OL}^k[\tau]$.
\end{definition}

\begin{remark}
    The way we have presented the syntax of $\mathrm{OL}$ here is slightly different from the way it is often presented in the literature. We chose to present the syntax this way to highlight the connection between $\mathrm{OL}$ and the fluted logic, since it is also natural to define the syntax for the latter logic inductively with respect to some parameter $k$.
\end{remark}

The syntax of this logic is somewhat involved, but it can be given a very nice algebraic characterization using just three relation operators $\{\neg,\cap,\exists\}$, which we are going to define next. Recalling that if $F$ is a relation operator, then $F^A$ denotes the function to which $F$ maps the set $A$, we can define the relation operators as follows. 

\begin{enumerate}
    \item [$\neg)$] We define $\neg^A(X,k) = (A^k\backslash X,k)$. We call $\neg$ the \textbf{complementation} operator.
    \item [$\cap)$] If $k\neq \ell$, then we define $\cap^A((X,k),(Y,\ell)) = \bot_0^A$. Otherwise we define 
    \[\cap^A((X,k),(Y,\ell)) = (X \cap Y, k).\]
    We call $\cap$ the \textbf{intersection} operator.
    \item [$\exists)$] If $k = 0$, then we define $\exists^A(X,k) = (X,k)$. Otherwise we define
    \[\exists^A(X,k) = (\{\overline{a} \mid \text{$\overline{a}b\in X$, for some $b\in A$}\},k-1).\]
    We call $\exists$ the \textbf{projection} operator.
\end{enumerate}

The following proposition establishes the promised characterization result.

\begin{proposition}
    $\mathrm{OL}$ and $\mathrm{GRA}(\neg,\cap,\exists)$ are sententially equiexpressive.
\end{proposition}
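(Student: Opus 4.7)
The proof establishes the two translations by induction.

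For the direction $\mathrm{GRA}(\neg, \cap, \exists) \to \mathrm{OL}$, I would define for each term $\mathcal{T}$ of arity $k$ an equivalent formula $\varphi_\mathcal{T} \in \mathrm{OL}^k$ recursively: $\varphi_R = R(v_1, \ldots, v_{ar(R)})$, and $\varphi_\top, \varphi_\bot$ are trivially-true and trivially-false $0$-ary sentences; $\varphi_{\neg \mathcal{T}} = \neg \varphi_\mathcal{T}$; $\varphi_{\mathcal{T} \cap \mathcal{P}} = \varphi_\mathcal{T} \land \varphi_\mathcal{P}$ (both conjuncts have arity $k$, so they belong to $\mathrm{OL}^k$); and $\varphi_{\exists \mathcal{T}} = \exists v_{k+1} \varphi_\mathcal{T}$ whenever $\mathcal{T}$ has arity $k+1$. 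Equivalence follows from a routine induction matching the recursive clauses of the interpretation, and $0$-ary terms yield OL sentences.

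For the direction $\mathrm{OL} \to \mathrm{GRA}(\neg, \cap, \exists)$, the subtlety is that OL syntax permits conjunctions $\varphi \land \psi$ with $\free(\varphi) \neq \free(\psi)$, e.g.\ $P(v_1) \land Q(v_1, v_2)$, while the operator $\cap$ only yields meaningful output when its operands share arity (otherwise it collapses to $\bot_0^A$). Thus no naive subformula-by-subformula translation works. My strategy is to first prove a normal form lemma: every OL sentence is logically equivalent to an OL sentence in which every conjunction $\alpha \land \beta$ satisfies $\free(\alpha) = \free(\beta)$. Given the lemma, a direct structural induction on the normalized sentence produces an equivalent $0$-ary term --- each OL formula $\varphi$ with $\free(\varphi) = \{v_1, \ldots, v_m\}$ is translated to an $m$-ary term by sending atoms to the corresponding relation symbols, same-arity conjunctions to $\cap$, negations to $\neg$, and $\exists v_{k+1}$ (applied at the top free variable) to $\exists$.

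The main obstacle is the normal form lemma. The essential tools are the FO equivalences $\exists v_k(\alpha \land \beta) \equiv \alpha \land \exists v_k \beta$ whenever $v_k \notin \free(\alpha)$ and $\forall v_k(\alpha \land \beta) \equiv \forall v_k \alpha \land \forall v_k \beta$, both valid inside OL since $\forall$ is expressible as $\neg \exists \neg$. These allow one to pull a quantifier from the enclosing context of a problematic conjunction inward, so as to equalize the free variables of the two conjuncts. However, a bad conjunction nested deep inside a sentence may be separated from the relevant quantifiers by intervening negations and further conjunctions, so the rewriting must be set up with care. I would prove the lemma by induction on a suitable syntactic measure (for instance, the total discrepancy in free-variable counts summed over all conjunctions of the sentence), showing that each rewrite step strictly decreases this measure while preserving both OL syntax and logical equivalence.
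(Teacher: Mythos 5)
Your proposal takes essentially the same route as the paper: the paper likewise treats the $\mathrm{GRA}(\neg,\cap,\exists)\to\mathrm{OL}$ direction as routine and reduces the $\mathrm{OL}\to\mathrm{GRA}$ direction to exactly your normal form lemma (every conjunction/disjunction has operands with the same free variables), proved by pushing quantifiers inwards using the fact that free variables of an OL formula always form a prefix $v_1,\dots,v_m$ so the quantified variable has the largest index, after which atoms, $\land$, $\neg$ and $\exists v_k$ are mapped to $R$, $\cap$, $\neg$ and $\exists$ just as you describe. If anything, your insistence on a decreasing syntactic measure and your attention to conjunctions buried under negations and nested conjunctions is more careful than the paper's one-line ``continuing this way it is clear'' for the same normalization step.
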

\begin{proof}
    We will focus on translating $\mathrm{OL}$ to $\mathrm{GRA}(\neg,\cap,\exists)$. As a first step, we will show how to translate every sentence in $\varphi\in \mathrm{OL}$ into an equivalent one which satisfies the following property. If $(\psi \circ \chi)$, where $\circ \in \{\lor,\land\}$, is a subformula of $\varphi$, then $\mathrm{Free}(\psi) = \mathrm{Free}(\chi)$, where $\mathrm{Free}(\psi)$ denotes the set of free variables of $\psi$. This can be achieved by pushing quantifiers inwards as follows. Suppose that we have a subformula of the form $Qv_k(\psi \circ \chi)$, where $v_k \in \mathrm{Free}(\psi) \cup \mathrm{Free}(\psi)$. Now we know that if $v_k \in \mathrm{Free}(\psi)$ and $v_k \in \mathrm{Free}(\chi)$, then $\mathrm{Free}(\psi) = \mathrm{Free}(\chi)$, since $v_k$ must be the variable among the free variables of $\psi$ and $\chi$ with the largest index. On the other hand if we have that $v_k$ occurs as a free variable only in one the formulas, say $\psi$, then $Qv_k(\psi \circ \chi)$ is equivalent to $(Qv_k \psi \circ \chi)$. Continuing this way it is clear that we achieve an equivalent sentence with the desired property.
    
    We then translate sentences of ordered logic to algebraic terms. Suppose that $\varphi \in \mathrm{OL}$ is a sentence which satisfies the above property. Formulas which have the form $R(v_1,...,v_\ell)$ are translated to $R$. Suppose then that we have translated $\psi$ to $\mathcal{T}$ and $\chi$ to $\mathcal{S}$. Then we can translate $(\psi \land \chi)$ to $(\mathcal{T} \cap \mathcal{S})$, $\neg \psi$ to $\neg \mathcal{T}$ and $\exists v_k \psi$ to $\exists \mathcal{T}$. In the first case we used the fact that $\mathrm{Free}(\psi) = \mathrm{Free}(\chi)$ and in the last case we used the fact that $v_k$ must be the free variable of $\psi$ with the largest index.
\end{proof}

\begin{remark}
    This characterization highlights another interesting restriction that OL imposes implicitly to its sentences: we can form boolean combinations of formulas only if they share the same set of free variables. In the literature this restriction is referred to as the \emph{uniformity} requirement and other examples of logics which satisfy the uniformity requirement are the uniform-one dimensional logic \cite{U1a} and the binding fragments of first-order logic \cite{Mogavero}.
\end{remark}

The complexity of $\mathrm{OL}$ is rather low and thus it is natural to ask how it changes if we add additional operators to the syntax of the logic. The first operator that is studied in this article is the operator $E$, which we define as follows.

\begin{enumerate}
    \item [$E)$] If $k < 2$, then we define $E^A(X,k) = (X,k)$. Otherwise we define
    \[E^A(X,k) = (\{(a \in X \mid a_{k-1} = a_k\},k).\]
    We call $E$ the \textbf{equality} operator.
\end{enumerate}

It turns out that the addition of equality does not increase the complexity of ordered logic. In our proof for the \textsc{Pspace} upper bound, it will be convenient to extend the ordered logic with an additional operator $I$, which we define as follows. 

\begin{enumerate}
    \item [$I)$] If $k \leq 1$, then we define $I^A(X,k) = (X,k)$, and otherwise we define
    \[I^A(X,k) = (\{\overline{a} \in A^{k-1} \mid \overline{a}a_k \in X\}, k - 1).\]
    We call $r$ the substitution operator.
\end{enumerate}
Note that if $ar(\mathcal{T}) \geq 2$, then $I\mathcal{T}$ is equivalent with $\exists E \mathcal{T}$, and hence the operator $I$ is definable in the algebra $\mathrm{GRA}(E,\neg,\cap,\exists)$.

Although the equality operator does not increase the complexity of OL, we will prove that lifting the two syntactical restrictions of OL in a minimal way will result in an increase in the complexity. More concretely, adding either of the following two operators to OL will result in a \textsc{NexpTime}-hard logic.

\begin{enumerate}
    \item [$s)$] If $k < 2$, then we define $s^A(X,k) = (X,k)$. Otherwise we define
    \[s^A(X,k) = (\{(a_1,...,a_{k-2},a_k,a_{k-1}) \mid (a_1,...,a_k) \in X\},k).\]
    We call $s$ the \textbf{swap} operator.
    \item [$C)$] If $k\neq 1$ and $\ell \leq 1$, then we define $C^A((X,k),(Y,\ell)) = \bot_A^0$. In the case where $1 = k \leq \ell$ (the case $1 = \ell \leq k$ is defined similarly) we will define
    \[C^A((Y,\ell),(X,k)) = (\{\overline{a} \in Y \mid a_\ell \in X\},\ell).\]
    We call $C$ the \textbf{one-dimensional intersection}.
\end{enumerate}

The intuition behind the swap operator is clear: it lifts in a minimal way the ordering restriction on the syntax of ordered logic. For instance, we can now express the formula $\exists v_1 R(v_1,v_2)$ using the term $\exists s R$. The one-dimensional intersection may appear to be somewhat unnatural, but the underlying intuition is that we want to lift the uniformity requirement in a minimal way. To give a concrete example of the use of one-dimensional intersection, we note that it allows us to expresses formulas such as $R(v_1,v_2) \land P(v_2)$ (this particular formula can be expressed using the term $C(R,P)$).

The other ordered fragment investigated in this article is the fluted logic FL. We will not give a formal definition for this fragment here, but instead we will introduce its algebraic characterization using the operators $\{\neg, \Dot{\cap}, \exists\}$, where $\Dot{\cap}$ is defined as follows.

\begin{enumerate}
    \item [$\Dot{\cap}$)] If $m:= \mathit{max}\{k,\ell\}$, then we define
    \begin{align*}
    \Dot{\cap}^A((X,k),(Y,\ell))
    &= \bigl(\{(a_1,\dots ,a_m)
    \mid (a_{m-k+1},\dots ,a_m) \in X\\
    &\text{ }\hspace{1.7cm}\text{ and } (a_{m-\ell + 1},\dots ,a_{m}) \in
    Y\},\, m \bigr),
    \end{align*}
    We call $\Dot\cap$ the $\textbf{suffix intersection}$.
\end{enumerate}

Intuitively, the tuples overlap on some
suffix of $(a_1,\dots , a_m)$; note here 
that when $k$ or $\ell$ is zero,
then $(a_{m+1},a_m)$ denotes the empty tuple.
Now for example the formula $R(v_1,v_2) \land P(v_2)$ is
equivalent to $R\, \Dot{\cap}\, P$
and the formula $R(v_1,v_2) \land P(v_1)$ to $s(sR\ \Dot{\cap}\, P)$. The following result was proved in \cite{preprintofthis2}.

\begin{proposition}
    $\mathrm{FL}$ and $\mathrm{GRA}(\neg,\Dot{\cap},\exists)$ are equiexpressive.
\end{proposition}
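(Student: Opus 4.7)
The plan is to establish equiexpressiveness by giving two translations, one from $\mathrm{FL}$-formulas to terms of $\mathrm{GRA}(\neg,\Dot{\cap},\exists)$ and one in the reverse direction, each preserving the associated AD-relation over every model. The key correspondence to exploit is that the ``level'' $k$ at which a fluted formula is built corresponds to the arity of the associated algebraic term, in the sense that such a formula always has free variables forming a suffix of $(v_1,\dots,v_k)$ ending at $v_k$; the suffix-alignment of free variables that the syntax of fluted conjunction imposes is then exactly what $\Dot{\cap}$ computes at the level of AD-relations.

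For the direction from $\mathrm{FL}$ to $\mathrm{GRA}$, I would proceed by structural induction on a fluted formula $\varphi$ living at level $k$. Atomic formulas $R(v_{k-\ell+1},\dots,v_k)$ are mapped to $R$; negations $\neg \psi$ to $\neg\,\tau(\psi)$; conjunctions $(\psi \land \chi)$ to $(\tau(\psi)\,\Dot{\cap}\,\tau(\chi))$; and existentials $\exists v_{k+1}\psi$ to $\exists\,\tau(\psi)$. The semantic equivalence $\llbracket \tau(\varphi) \rrbracket_\mA = \llbracket \varphi \rrbracket_\mA$ then follows by induction on $\varphi$. The only non-trivial case is conjunction, where one uses the fact that the free variables of $\psi$ and $\chi$ are both suffixes of $(v_1,\dots,v_k)$ ending at $v_k$, so the tuples witnessing the conjuncts align on their common suffix, which is precisely the semantics of $\Dot{\cap}$.

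For the direction from $\mathrm{GRA}$ to $\mathrm{FL}$, I would define, for each term $\mathcal{T}$ and each integer $k \geq ar(\mathcal{T})$, a fluted formula $\sigma_k(\mathcal{T})$ at level $k$ whose free variables are exactly $v_{k-ar(\mathcal{T})+1},\dots,v_k$. Explicitly: $\sigma_k(R) := R(v_{k-\ell+1},\dots,v_k)$ where $\ell = ar(R)$; $\sigma_k(\neg\mathcal{S}) := \neg\,\sigma_k(\mathcal{S})$; $\sigma_k(\mathcal{S}\,\Dot{\cap}\,\mathcal{U}) := \sigma_k(\mathcal{S}) \land \sigma_k(\mathcal{U})$, which is a well-formed fluted formula because both conjuncts are produced at the common level $k$ and their free-variable sets are suffixes both ending at $v_k$; and $\sigma_k(\exists\,\mathcal{S}) := \exists v_{k+1}\,\sigma_{k+1}(\mathcal{S})$, which is at level $k$ since $k+1 \geq ar(\mathcal{S})$. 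A parallel induction yields $\llbracket \sigma_k(\mathcal{T}) \rrbracket_\mA = \llbracket \mathcal{T} \rrbracket_\mA$.

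The main obstacle will be the bookkeeping of levels and free-variable suffixes in both directions. In particular, one must verify that algebraic existential quantification, which simply drops the last coordinate of an AD-relation, matches the syntactic requirement in fluted logic that the variable being bound is always the one with the largest index currently in scope, and that the suffix intersection correctly reconciles the two (possibly different) arities of its operands with the fluted convention that conjunctions extend the shorter conjunct's free-variable suffix to the longer one. Once this level machinery is set up, both inductions follow routinely from the semantic clauses of $\Dot{\cap}$ and $\exists$.
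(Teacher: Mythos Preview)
The paper does not actually prove this proposition: it is stated with the remark that the result ``was proved in \cite{preprintofthis2}'', so there is no in-paper proof to compare against. Your outline is the natural two-way translation one would expect, and it is essentially correct; it also parallels in spirit the paper's own proof of the analogous statement for $\mathrm{OL}$ and $\mathrm{GRA}(\neg,\cap,\exists)$, the main difference being that $\Dot{\cap}$ absorbs the arity-mismatch directly, so no preliminary quantifier-pushing is needed.

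Two small points worth tightening. First, in the $\mathrm{FL}\to\mathrm{GRA}$ direction your invariant is that $ar(\tau(\varphi))=|\Free(\varphi)|$; when you say the free variables of both conjuncts are ``suffixes of $(v_1,\dots,v_k)$ ending at $v_k$'', you should also allow the empty set, since $\mathrm{FL}^{[k]}$ can contain sentences (e.g.\ $\exists v_{k+1}P(v_{k+1})$). The $\Dot{\cap}$ clause handles the arity-$0$ case correctly, but your phrasing excludes it. Likewise, in the $\exists$ clause you should note that if $\psi\in\mathrm{FL}^{[k+1]}$ is already a sentence then $\exists v_{k+1}\psi$ is vacuous and $\exists$ on an arity-$0$ term is the identity, so the invariant survives. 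Second, in the $\mathrm{GRA}\to\mathrm{FL}$ direction you omit the base cases $\top$ and $\bot$, which are always part of the term language; these can be rendered at level $k$ by something like $\exists v_{k+1}\cdots\exists v_{k+\ell}(R(v_{k+1},\dots,v_{k+\ell})\lor\neg R(v_{k+1},\dots,v_{k+\ell}))$ and its negation for any $R\in\tau$. None of this affects the substance of your argument.
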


It was proved in \cite{flutedlidiatendera} that the satisfiability problem for $\mathrm{FL}$ is \textsc{Tower}-complete. The natural follow-up question is then to study what fragments of $\mathrm{FL}$ have more "feasible" complexity. In this article we approach this question by studying the so-called \emph{one-dimensional} fragment of fluted logic. To give this logic an algebraic characterization, we will need to introduce two additional operators, $\exists_1$ and $\exists_0$, which we define as follows.

\begin{enumerate}
    \item [$\exists_1$)] If $k < 2$, then we define $\exists_1^A(X,k) = (X,k)$. Otherwise we define
    \[\exists_1^A(X,k) = (\{a\in A \mid \text{There exists $\overline{b} \in A^{k-1}$ such that $a\overline{b} \in X$}\},1)\]
    \item [$\exists_0$)] If $k = 0$, then we define $\exists_0^A(X,k) = (X,k)$. Otherwise we define
    $\exists_0^A(X,k)$ to be $\top_A^0$, if $X$ is non-empty, and $\bot_A^0$, if $X$ is empty.
\end{enumerate}

We call collectively the operators $\exists_1$ and $\exists_0$ \textbf{one-dimensional projection} operators. These operators correspond to quantification which leaves at most one free-variable free. For example, if $R$ is a ternary relational symbol, then $\exists_1R$ is equivalent to $\exists v_2 \exists v_3 R(v_1,v_2,v_3)$ while $\exists_0$ is equivalent to $\exists v_1 \exists v_2 \exists v_3 R(v_1,v_2,v_3)$. Now we define the algebra $\mathrm{GRA}(\neg,\Dot{\cap},\exists_1,\exists_0)$ to be the one-dimensional fluted logic.

\begin{table}[t]
\centering
\begin{tabular}{| c | c |}
\hline
 $E,\neg,\cap,\exists_1,\exists_0$ & \textsc{NP} \\
 $E,\neg,\cap,\exists$ & \textsc{Pspace} \\ 
 $s,\neg,C,\cap,\exists$ & \textsc{NexpTime} \\ 
 $E,\neg,C,\cap,\exists$ & \textsc{NexpTime} \\ 
 $s,E,\neg,C,\cap,\exists$ & \textsc{NexpTime}-hard \\
 $s,E,\neg,\Dot{\cap}, \exists_1, \exists_0$ & \textsc{NexpTime} \\
 $p,\neg,\cap,\exists$ & $\Pi_1^0$ \\
 $p,\neg,\Dot{\cap}, \exists_1, \exists_0$ & $\Pi_1^0$ \\
 $s,\neg,\Dot{\cap},\exists$ & $\Pi_1^0$ \\
\hline
\end{tabular}
\caption{\label{tab:table-name} Complexities of the fragments.}
\end{table}

As one might expect, imposing the one-dimensionality requirement to formulas of FL will result in a logic with much lower complexity. The exact complexity of one-dimensional $\mathrm{FL}$ turns out to be \textsc{NexpTime}-complete, even for its extension with the swap and equality operators $\mathrm{GRA}(s,E,\neg,\Dot{\cap},\exists_1,\exists_0)$. In this article we also study the one-dimensional fragment of ordered logic with equality operator $\mathrm{GRA}(E,\neg,\cap,\exists)$, for which the satisfiability problem turns out to be just \textsc{NP}-complete.

Besides just decidability results, we will also prove several undecidability results. To state some of these results, we will first define the following operator $p$.

\begin{enumerate}
    \item [$p$)] If $k < 2$, then we define $p^A(X,k) = (X,k)$. Otherwise we define
    \[p^A(X,k) = (\{(a_1,...,a_k) \mid (a_k,a_1,...,a_{k-1}) \in X\},k).\]
    We call $p$ the \textbf{cyclic permutation} operator.
\end{enumerate}

It turns out that - perhaps unsurprisingly - this operator can greatly increase the expressive power of the underlying logic. For instance, together with $\exists$, this operator removes any restrictions on the order in which variables need to be quantified. Our first two undecidability results show that adding the operator $p$ to either ordered logic or one-dimensional fluted logic will lead to an undecidable logic.

Our third undecidability result is that the extension of fluted logic with swap is undecidable. Since the restriction of $\mathrm{GRA}(e,s,\neg,\Dot{\cap},\exists)$ to vocabularies of arity at most two is sententially equivalent with two-variable logic \cite{preprintofthis2}, this result seems to indicate that there does not exists a natural decidable logic that extends both the two-variable logic and the fluted logic.

Let us conclude this section by mentioning briefly two complexity results that follow immediately from the literature and which complement the picture emerging from the results listed in Table 1. First, it is easy to translate the algebra $\mathrm{GRA}(p,s,E,\neg,C,\cap,\exists_1,\exists_0)$ into the one-dimensional uniform fragment $\mathrm{UF}_1$, which was proved to be \textsc{NexpTime}-complete in \cite{U1b}. It is also not hard to see that this algebra is \textsc{NexpTime}-hard, and hence the satisfiability problem for this algebra is \textsc{NexpTime}-complete. The second result that we should mention is that the satisfiability problem for $\mathrm{GRA}(E,\neg,\Dot{\cap},\exists)$ is \textsc{Tower}-complete, since it contains $\mathrm{FL}$ and it can be translated to $\mathrm{FL}$ with equality, for which the satisfiability problem was recently proved in \cite{flutedequality} to be \textsc{Tower}-complete.

\section{Tables and normal forms}

In this article we are going to perform several model constructions and hence it is useful to start by collecting some definitions and tools that we are going to need in the later sections. We will start by defining the concept of a table which serves as an approximation of the more standard definition of a type from model theory.

\begin{definition}
    Let $k\in \mathbb{Z}_+$ and $\mathcal{F} \subseteq \{I,s\}$. A \emph{$k$-table} with respect to $\mathcal{F}$ is a maximally consistent set of $k$-ary terms of the form $\mathcal{T}$ or $\neg \mathcal{T}$, where $\mathcal{T} \in \mathrm{GRA}(\mathcal{F})$. Given a model $\mathfrak{A}$ and $\overline{a} \in A^k$, we will use $tp_\mathfrak{A}(\overline{a})$ to denote the $k$-table realized by $\overline{a}$.
\end{definition}

We will identify $k$-tables $\rho$ with the terms $\bigcap_{\alpha \in \rho} \alpha$, which makes sense since all of the algebraic signatures that we are going to consider always include the operator $\cap$. This allows us to use notation such as $\rho \models \rho'$, where $\rho$ and $\rho'$ are $k$-tables. Furthermore, we will refer to $1$-tables also as $1$-types. We say that $a\in A$ is \textbf{king}, if there is no other element in the model that realizes the same $1$-type.

Notice that there is almost no "overlapping" between tables. For instance, if we consider tables for $\varnothing$, then the table realized by a tuple $(a_1,...,a_k)$ will not imply \emph{anything} about the table realized by any non-identity permutation of the tuple $(a_1,...,a_k)$ or any sub-tuple of $(a_1,...,a_k)$. And even if we are considering tables for $\{s\}$, the table realized by $(a_1,...,a_k)$ will only imply something about the table realized by $(a_1,...,a_k,a_{k-1})$. This lack of overlapping between tables is crucial for all of the model constructions done in this article.

The purpose of a type ---typically--- is to capture all the information about a tuple that could be expressed by using a quantifier-free formula. Analogously, we would like to capture all the information about a tuple that could be expressed using a term which does not contain instances of the operators $\exists,\exists_1$ and $\exists_0$.

\begin{definition}
    Let $\mathfrak{A}$ and $\mathfrak{B}$ be models over the same vocabulary, and let $\mathcal{F} \subseteq \{I,s,E,\neg,C,\cap,\Dot{\cap}\}$. Let $\overline{a} \in A^k$ and $\overline{b} \in B^k$, where $k\in \mathbb{Z}_+$. We say that $\overline{a}$ and $\overline{b}$ are \emph{similar} with respect $\mathcal{F}$, if for every $k$-ary term $\mathcal{T} \in \mathrm{GRA}(\mathcal{F})$ we have that
    \[\overline{a} \in \llbracket \mathcal{T} \rrbracket_\mathfrak{A} \iff \overline{b} \in \llbracket \mathcal{T} \rrbracket_\mathfrak{B}.\]
\end{definition}

For different subsets of $\{I,s,E,\neg,C,\cap,\Dot{\cap}\}$ one can find explicit characterizations for when two tuples are similar using the notions of $1$-types and tables. For example, if $\mathcal{F} = \{s,C,\neg,\cap\}$, then two tuples $\overline{a}$ and $\overline{b}$ are similar with respect to $\mathcal{F}$ if and only if $tp_\mathfrak{A}(\overline{a}) = tp_\mathfrak{B}(\overline{b}), tp_\mathfrak{A}(a_{k-1}) = tp_\mathfrak{B}(b_{k-1})$ and $tp_\mathfrak{A}(a_k) = tp_\mathfrak{B}(b_k)$. The reader is encouraged to try to come up with similar characterizations for different subsets of $\mathcal{F}$, since we are going to use such characterizations heavily in the rest of this article.

A standard technique when constructing models of bounded size for formulas of $\mathrm{FO}$ is to restrict attention to formulas of very specific form. We will next introduce two such normal forms. In the normal forms we will use the operator $\cup$ which can be defined in a standard way in terms of $\neg$ and $\cap$.

\begin{definition}
    Let $\mathcal{F} \subseteq \{I,s,E,C\}$. We say that a term $\mathcal{T} \in \mathrm{GRA}(\mathcal{F} \cup \{\neg,\cap,\exists\})$ is in \emph{normal form}, if it has the following form
    \[\bigcap_{i\in I'} \exists \kappa_i \cap \bigcap_{j\in J'} \forall \lambda_j \cap \bigcap_{i\in I} \forall^{n_i}(\neg \alpha_i^\exists \cup \exists \beta_i^\exists) \cap \bigcap_{j\in J} \forall^{n_j}(\neg \alpha_j^\forall \cup \forall \beta_j^\forall),\]
    where $\kappa_i,\lambda_j,\alpha_i^\exists,\beta_i^\exists,\alpha_j$ and $\beta_j^\forall$ are terms of $\mathrm{GRA}(\mathcal{F} \cup \{\neg,\cap\})$. Here $\forall$ is short-hand notation for $\neg \exists \neg$ and $\forall^n$ stands for a sequence of $\forall$ of length $n$.
\end{definition}

\begin{definition}
    Let $\mathcal{F} \subseteq \{I,s,E,C\}$. We say that a term $\mathcal{T} \in \mathrm{GRA}(\mathcal{F} \cup \{\neg,\cap,\Dot{\cap},\exists_1,\exists_0\})$ is in \emph{normal form}, if it has the following form
    \[\bigcap_{i\in I'} \exists_0 \kappa_i \cap \bigcap_{j\in J'} \forall_0 \lambda_j \cap \bigcap_{i\in I} \forall_0 (\neg \alpha_i^\exists \cup \exists_1 \beta_i^\exists) \cap \bigcap_{j\in J} \forall_0(\neg \alpha_j^\forall \cup \forall_1 \beta_j^\forall),\]
    where $\kappa_i,\lambda_j,\alpha_i^\exists,\beta_i^\exists,\alpha_j$ and $\beta_j^\forall$ are terms of $\mathrm{GRA}(\mathcal{F} \cup \{\neg,\cap,\Dot{\cap}\})$, and the terms $\kappa_i$ and $\lambda_j$ are unary. Here $\forall_0$ and $\forall_1$ are short-hand notations for $\neg \exists_0 \neg$ and $\neg \exists_1 \neg$ respectively.
\end{definition}

In a rather standard fashion one can prove the following lemma.

\begin{lemma}
    Let $\mathcal{F} \subseteq \{I,s,E,C\}$.
    \begin{enumerate}
        \item There is a polynomial time nondeterministic procedure, taking as its input a term $\mathcal{T} \in \mathrm{GRA}(\mathcal{F} \cup \{\neg,\cap,\exists\})$ and producing a term $\mathcal{T}'$ in normal form (over extended signature), such that
        \begin{itemize}
            \item if $\mathfrak{A}\models \mathcal{T}$, for some structure $\mathfrak{A}$, then there exists a run of the procedure which produces a term $\mathcal{T}'$ in normal form so that $\mathfrak{A}' \models \mathcal{T}$ for some expansion $\mathfrak{A}'$ of $\mathfrak{A}$.
            \item if the procedure has a run producing $\mathcal{T}'$ and $\mathfrak{A} \models \mathcal{T}'$, for some $\mathfrak{A}$, then $\mathfrak{A} \models \mathcal{T}$.
        \end{itemize}
        \item There is a polynomial time nondeterministic procedure, which operates similarly as the above procedure with the exception that it takes as its input a term in $\mathcal{T} \in \mathrm{GRA}(\mathcal{F} \cup \{\neg,\cap,\Dot{\cap},\exists\})$, and which satisfies the additional requirement that if $\mathcal{T}$ does not contain the operator $\Dot{\cap}$, then neither does any of the terms that this procedure produces.
    \end{enumerate}
\end{lemma}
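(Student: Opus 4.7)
The approach is a standard Scott-style renaming. First, I would preprocess $\mathcal{T}$ by pushing negations toward the leaves using the De Morgan-style identities $\neg\neg \mathcal{S} \equiv \mathcal{S}$, $\neg(\mathcal{S}_1 \cap \mathcal{S}_2) \equiv \neg \mathcal{S}_1 \cup \neg \mathcal{S}_2$, and $\neg \exists \mathcal{S} \equiv \forall \neg \mathcal{S}$. This yields an equivalent term of polynomial size in which every subterm has a fixed polarity, so that adding definitional axioms later does not interfere with the rest of the term.

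Next, while some quantifier occurs strictly below another in the current term, pick an innermost occurrence $\mathcal{S}$ of the form $\exists \mathcal{R}$ (or $\forall \mathcal{R}$), with $\mathcal{R} \in \mathrm{GRA}(\mathcal{F} \cup \{\neg,\cap\})$ of arity $n+1$; introduce a fresh $n$-ary relation symbol $R$, replace the chosen occurrence of $\mathcal{S}$ by $R$, and conjoin both directions of the definitional equivalence
\[\forall^n(\neg R \cup \exists \mathcal{R}) \cap \forall^n(\neg \neg R \cup \forall \neg \mathcal{R}).\]
The first conjunct matches the shape $\forall^{n_i}(\neg \alpha_i^\exists \cup \exists \beta_i^\exists)$ with $\alpha_i^\exists = R$, $\beta_i^\exists = \mathcal{R}$, and the second matches $\forall^{n_j}(\neg \alpha_j^\forall \cup \forall \beta_j^\forall)$ with $\alpha_j^\forall = \neg R$, $\beta_j^\forall = \neg \mathcal{R}$. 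Each step strictly decreases the total number of nested quantifiers, so the loop terminates in at most $|\mathcal{T}|$ rounds. Any remaining top-level conjunct of the form $\exists \kappa$ or $\forall \lambda$ with non-unary body is then flattened by a further renaming round until the body becomes unary, giving the $\exists \kappa_i$ and $\forall \lambda_j$ parts of the normal form. The nondeterminism can be used to guess the polarity of each introduced predicate so that only the relevant direction of its definition need be retained.

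Correctness is routine. For soundness, the definitional conjuncts force each fresh $R$ in a model of the produced $\mathcal{T}'$ to coincide with the interpretation of its defining subterm, so forgetting the fresh symbols yields a model of $\mathcal{T}$. For completeness, given $\mathfrak{A} \models \mathcal{T}$ one expands $\mathfrak{A}$ by interpreting each fresh $R$ as the extension of its defining subterm, obtaining $\mathfrak{A}' \models \mathcal{T}'$. Part (2) is handled by the same algorithm, using $\exists_1$ (respectively $\exists_0$) in place of $\exists$ to fit the shape of the second normal form; since the renaming step never manufactures $\Dot\cap$, the conservation property asked for is immediate.

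The main obstacle is careful bookkeeping of arities: the uniformity of $\cup$ forces $\neg \alpha$ and $\exists \beta$ (or $\forall \beta$) in each conjunct to share an arity, so each fresh predicate must be assigned exactly the arity of the subterm it replaces; and when flattening iterated top-level quantifications down to the unary-body $\exists \kappa_i$ and $\forall \lambda_j$ of the first normal form, one must match the quantification depth to the arity drop step by step. In part (2) there is the additional subtlety that one must decide between $\exists_1$ and $\exists_0$ based on whether the renamed subterm ends up unary or $0$-ary, which again is a matter of tracking arities along the chain of renamings.
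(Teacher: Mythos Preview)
Your renaming step is essentially the same Scott-style construction the paper uses, but you misidentify the role of the nondeterminism and this leaves a genuine gap. After your loop terminates, the ``main'' part of the term (everything except the definitional conjuncts) is an arbitrary Boolean combination of $0$-ary pieces of the form $\exists\kappa$ or $\forall\lambda$ with $\kappa,\lambda$ unary and quantifier-free --- not necessarily a conjunction. For instance, starting from $\mathcal{T}=\neg(\exists P\cap\exists Q)$ with $P,Q$ unary, your loop does nothing (no quantifier lies below another), NNF gives $(\forall\neg P)\cup(\forall\neg Q)$, and this disjunction is not in normal form. Your proposed use of nondeterminism (``guess the polarity of each introduced predicate so that only the relevant direction of its definition need be retained'') does not address this, and your ``flattening'' step is premised on the main part already being a conjunction. (Incidentally, after your loop the bodies of the surviving top-level quantifications are automatically unary, so that flattening step is in any case idle.)

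The paper deploys the nondeterminism precisely to close this gap: whenever the innermost $\exists\mathcal{P}$ has $\mathcal{P}$ unary (so $\exists\mathcal{P}$ is $0$-ary), the procedure \emph{guesses its truth value}, replaces that occurrence by $\top$ or $\bot$, and conjoins $\exists\mathcal{P}$ or $\forall\neg\mathcal{P}$ on the outside. This simultaneously (i) avoids introducing $0$-ary relation symbols, which the paper explicitly disallows, and (ii) collapses the top-level Boolean combination into the required pure conjunction of $\exists\kappa_i$'s and $\forall\lambda_j$'s. For the case $ar(\mathcal{P})\geq 2$ the paper does exactly the fresh-symbol replacement you describe, conjoining $\forall^{ar(R)}(\neg R\cup\exists\mathcal{P})\cap\forall^{ar(R)}(R\cup\forall\neg\mathcal{P})$. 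So the core mechanism is the same; the difference is that the paper uses nondeterminism to resolve $0$-ary subterms rather than to select polarities of introduced predicates.
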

\begin{proof}
    We will sketch a proof for the first claim. If $\mathcal{T}$ is a $0$-ary term of the form $\exists \mathcal{P}$ or $\neg \exists \mathcal{P}$, where $\mathcal{P}$ is quantifier-free, then it already is in normal form. Otherwise $\mathcal{T}$ contains a proper subterm $\exists \mathcal{P}$, where $\mathcal{P}$ is a quantifier-free term. If $\mathcal{P}$ is unary, then we will guess a truth value, and replace $\exists \mathcal{P}$ with either $\bot$ or $\top$ according to this guess. If the resulting term is $\mathcal{T}'$, then $\mathcal{T}$ is equi-satisfiable with either $\mathcal{T}' \cap \exists \mathcal{P}$ or $\mathcal{T}' \cap \neg \exists \mathcal{P}$.
    
    In the case where the arity of $\mathcal{P}$ is at least two, we will introduce a fresh relational symbol $R$ of same arity as $\exists \mathcal{P}$, and replace the latter with the former in $\mathcal{T}$. If the resulting term is $\mathcal{T}'$, then it is easy to verify that $\mathcal{T}$ is satisfiable over the same domain as
    \[\mathcal{T}' \cap \forall^{ar(R)}(\neg R\cup \exists \mathcal{P}) \cap \forall^{ar(R)}(R\cup \forall \neg \mathcal{P}).\]
    In both cases by repeating the above process on the term $\mathcal{T}'$, we will eventually end up with an equi-satisfiable term which is in normal form.
\end{proof}

\begin{remark}
    If $C\not\in \mathcal{F}$, then the above procedures can be replaced by deterministic ones, by simply increasing the arity of each relational symbol that occurs in the input term, since then the resulting term does not contain proper $0$-ary subterms.
\end{remark}

To conclude this section, we will introduce some further notation and terminology which will be useful in the later sections of this article. Consider a term $\mathcal{T}$ in normal form. Subterms of $\mathcal{T}$ that are of the form
\[\forall^{n_i}(\neg \alpha_i^\exists \cup \exists \beta_i^\exists)\]
or
\[\forall_0(\neg \alpha_i^\exists \cup \exists_1 \beta_i^\exists)\]
are called \textbf{existential requirements} and we will denote them with $\mathcal{T}_i^\exists$. Similarly subterms of the form $\forall^{n_j}(\neg \alpha_j^\forall \cup \forall \beta_j^\forall)$ or of the form $\forall_0(\neg \alpha_j^\forall \cup \forall_1 \beta_j^\forall)$ will be called \textbf{universal requirements} and we will denote them with $\mathcal{T}_j^\forall$. 

Consider a model $\mathfrak{A}$ and an existential requirement $\mathcal{T}_i^\exists$. If $\mathcal{T}_i^\exists$ is of the form $\forall^{n_i}(\neg \alpha_i^\exists \cup \exists \beta_i^\exists)$ and $\overline{a} \in \llbracket \alpha_i^\exists \rrbracket_\mathfrak{A}$, then an element $c\in A$ so that $\overline{a}c\in \llbracket \beta_i^\exists \rrbracket_\mathfrak{A}$ will be called a \textbf{witness} for $\overline{a}$ and $\mathcal{T}_i^\exists$. Similarly, if $\mathcal{T}_i^\exists$ is of the form $\forall_0(\neg \alpha_i^\exists \cup \exists_1 \beta_i^\exists)$ and $a\in \llbracket \alpha_i^\exists \rrbracket_\mathfrak{A}$, then a tuple $\overline{c}\in A^k$, where $k = ar(\beta_i^\exists) - 1$, is called a witness for $a$ and $\mathcal{T}_i^\exists$.

\section{Ordered logic with equality}

In this section we will study the complexity of $\mathrm{GRA}(E,\neg,\cap,\exists)$, i.e. ordered logic with equality. We will start by proving that this logic has a polynomially bounded model property, which means that each satisfiable term has a model of size at most polynomial with respect to the size of the term.

Before proceeding with the proof, we will first note that w.l.o.g. we can assume that if an element $c$ is a witness for some existential requirement $\mathcal{T}_i^\exists$ and a tuple $(a_1,...,a_k)$, then $a_k \neq c$. This follows from the observation that if $\mathcal{T}_i^\exists$ is of the form
\[\forall^{n_i}(\neg \alpha_i^\exists \cup \exists \beta_i^\exists)\]
then it is equivalent with the following term
\[\forall^{n_i}(\neg (\alpha_i^\exists \cap \neg \exists E \beta_i^\exists) \cup \exists \beta_i^\exists)\]
where we can replace $\exists E \beta_i^\exists$ with $I\beta_i^\exists$. This is the exact reason why it is convenient to extend the syntax of the ordered logic with the operator $I$.

\begin{theorem}\label{linearlyorderedpolynomialmodel}
    Let $\mathcal{T} \in \mathrm{GRA}(I,E,\neg,\cap,\exists)$ and suppose that $\mathcal{T}$ is satisfiable. Then $\mathcal{T}$ has a model of size bounded polynomially in $|\mathcal{T}|$.
\end{theorem}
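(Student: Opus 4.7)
My plan is to construct a polynomial-size model by starting from an arbitrary satisfying model $\mathfrak{A}\models\mathcal{T}$ and extracting a carefully chosen substructure.

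First, I would apply Lemma 4.6 to replace $\mathcal{T}$ by an equi-satisfiable term in normal form, so that we are working with a conjunction of zero-ary existentials $\exists\kappa_i$, zero-ary universals $\forall\lambda_j$, existential requirements $\forall^{n_i}(\neg\alpha_i^\exists \cup \exists\beta_i^\exists)$, and universal requirements $\forall^{n_j}(\neg\alpha_j^\forall\cup\forall\beta_j^\forall)$ whose $\alpha$'s and $\beta$'s are quantifier-free. Then, exactly as spelled out in the paragraph preceding the theorem, I would rewrite each existential requirement by strengthening its precondition from $\alpha_i^\exists$ to $\alpha_i^\exists \cap \neg I\beta_i^\exists$. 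This is where the equality operator (and its derived form $I$) pays off: after this reformulation we may assume that whenever we are actually required to produce a witness $c$ for a tuple $(a_1,\dots,a_{n_i})$ we always have $c \neq a_{n_i}$, which eliminates the awkward boundary case in which a tuple would be its own witness.

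Next, I would build a substructure $\mathfrak{B} \subseteq \mathfrak{A}$ in stages. Stage $0$ picks, for each $\exists\kappa_i$, a single element of $\mathfrak{A}$ realising $\kappa_i$, yielding a seed $B_0$ of size linear in $|\mathcal{T}|$. At stage $k+1$, for each existential requirement $\mathcal{T}_i^\exists$ of arity $k+1$ and for each similarity class (with respect to the quantifier-free fragment $\mathrm{GRA}(I,E,\neg,\cap)$ of $\mathcal{T}$'s vocabulary) of tuples in $B_k^k$ that satisfies $\alpha_i^\exists$, I would pick one element $c\in A$ from $\mathfrak{A}$ witnessing that requirement and add it to form $B_{k+1}$. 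The resulting $\mathfrak{B}$ is the substructure of $\mathfrak{A}$ induced by $B := \bigcup_k B_k$. Universal requirements are then inherited from $\mathfrak{A}$ because, being universally quantified statements over atomic conditions, they are downward-preserved to induced substructures; the existential requirements are met directly by construction.

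The main obstacle, and the combinatorial heart of the proof, is to bound $|B|$ polynomially in $|\mathcal{T}|$. The crucial structural fact I would exploit is that OL-atoms mention only prefixes $(v_1,\dots,v_\ell)$ of the variable list, so when a witness $c$ is appended to $(a_1,\dots,a_k)$, the only newly-created atomic information concerns relations of arity exactly $k+1$ between $(a_1,\dots,a_k)$ and $c$, together with the $1$-type of $c$ and the (now forbidden by the reformulation) equality $c=a_k$. Consequently, at each arity $k$, the relevant similarity classes that arise during the construction are controlled by the $1$-types of positions together with a bounded amount of prefix information coming from the $\alpha_i^\exists$'s; since only polynomially many requirements appear in the normal form, I expect a level-by-level accounting — with the $c\neq a_k$ assumption preventing propagation of coincidences — to yield the desired polynomial bound. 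Showing this bound rigorously, together with verifying that the witness choice really can be made uniformly across a similarity class, is the most delicate step.
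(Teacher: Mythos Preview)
Your proposal contains a genuine gap: building $\mathfrak{B}$ as an \emph{induced substructure} of $\mathfrak{A}$ cannot work, because witnesses cannot be shared across a similarity class in a substructure. Concretely, suppose $\overline{a},\overline{a}'\in B^k$ are similar (same $k$-table) and both satisfy $\alpha_i^\exists$. You pick a single $c\in A$ with $\overline{a}c\in\llbracket\beta_i^\exists\rrbracket_\mathfrak{A}$ and add it to $B$. But $\beta_i^\exists$ may mention a $(k{+}1)$-ary relation $R$, and whether $R(\overline{a}',c)$ holds in $\mathfrak{A}$ is completely independent of the table of $\overline{a}'$ and of the fact that $R(\overline{a},c)$ holds. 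So in the induced substructure there is no reason that $c$ witnesses $\mathcal{T}_i^\exists$ for $\overline{a}'$. If instead you add one witness per \emph{tuple} (not per class), the closure process iterates across all arities up to $|\mathcal{T}|$ and the size blows up; your ``level-by-level accounting'' sketch does not control this.

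The paper avoids this entirely by \emph{not} taking a substructure. It fixes a polynomial-size abstract domain $B=I'\times I\times\{0,1\}$ up front and then \emph{defines} tables on $B$ inductively in $k$, copying them from suitable tuples in $\mathfrak{A}$: given $\overline{b}\in B^k$ similar to some $\overline{a}\in A^k$, one sets $tp_\mathfrak{B}(\overline{b}d):=tp_\mathfrak{A}(\overline{a}c)$ for the designated witness slot $d\in B$, where $c$ is a witness for $\overline{a}$ in $\mathfrak{A}$. The point is that in ordered logic the table of $\overline{b}d$ constrains nothing about the table of $\overline{b}'d$ for $\overline{b}\neq\overline{b}'$, so the same element $d$ can be reused as a witness for many tuples with independently chosen tables; this reuse is exactly what fails in an induced substructure. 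Your reformulation via $I$ (ensuring $c\neq a_k$) is the right preparatory step and is used in the paper too, but it feeds into this copying construction, not into a substructure argument.
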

\begin{proof}
    Let $\mathcal{T} \in \mathrm{GRA}(I,E,\neg,\cap,\exists)$ be a term in normal form. Let $\mathfrak{A}$ be a model of $\mathcal{T}$. Without loss of generality we will assume that $\mathfrak{A}$ contains at least two distinct elements. Our goal is to construct a bounded model $\mathfrak{B} \models \mathcal{T}$.
    
    As the domain of our model we will take the following set
    \[B = I' \times I \times \{0,1\}.\]
    To define the model, we just need to specify the tables for all the $k$-tuples of elements from $B$. This will be done inductively, and in such a way that the following condition is maintained: for every $\overline{b} \in B^k$ there exists $\overline{a} \in A^k$ so that $\overline{b}$ is similar with $\overline{a}$. Maintaining this requirement will make sure that our model $\mathfrak{B}$ will not violate any universal requirements.
    
    We will start by defining the $1$-types for all the elements of $B$. Since $\mathfrak{A}\models \bigcap_{i\in I'} \exists \kappa_i$, for every $i\in I'$ there exists $a_i\in A$ so that $tp_\mathfrak{A}(a_i)\models \kappa_i$. We will define that for every $(i,i',j) \in B$, $tp_\mathfrak{B}((i,i',j)) = tp_\mathfrak{A}(a_i)$. Suppose then that we have defined the tables for $k$-tuples and we wish to define the tables for $(k+1)$-tuples. We will start by making sure that all the existential requirements are full-filled. So, let $i\in I$ and $\overline{b} \in B^k$ so that we have not assigned a witness for $\overline{b}$ and $\mathcal{T}_i^\exists$. By construction we know that there exists $\overline{a} \in A^k$ which is similar to $\overline{b}$. Now there exists $a_k \neq c\in A$ so that $\overline{a}c \in \llbracket \beta_i^\exists \rrbracket$. If $b_k = (i',i'',j)$, then we will use the element $d = (i',i,j+1 \mod 2)$ as a witness for $\overline{b}$ by defining that $tp_\mathfrak{B}(\overline{b}d) = tp_\mathfrak{A}(\overline{a}c)$. Since we have reserved for every element $|I|$ distinct witnesses for the existential requirements, the process of providing witnesses can be done without conflicts.
    
    Having provided witnesses for $k$-tuples, we will still need to do define the $(k+1)$-tables for the remaining $k$-tables. So, let $\overline{b} \in B^k$ and $d\in B$ be elements so that the table of $\overline{b}d$ has not been defined. If $b_k =d$, then the table for $\overline{b}d$ is determined by the table for $\overline{b}$. Suppose then that $b_k \neq d$. Let $\overline{a} \in A^k$ be a $k$-tuple which is similar with $\overline{b}$. Pick an arbitrary $a_k\neq c\in A$ and define $tp_\mathfrak{B}(\overline{b}d) = tp_\mathfrak{A}(\overline{a}c)$.
\end{proof}

The above theorem implies almost immediately that if we assume that the underlying vocabulary to be bounded, i.e. there is a fixed constant bound on the maximum arity of relational symbols, then the complexity of the ordered logic is quite low. However, we need to first establish that the combined complexity of the model checking problem for the logic is in \textsc{P} over bounded vocabularies.

\begin{lemma}
    Let $k \in \mathbb{Z}_+$ and suppose that $\tau$ is a vocabulary where each term has an arity of at most $k$. Let $\mathcal{T} \in \mathrm{GRA}(E,\neg,\cap,\exists)[\tau]$ be a term and suppose that $\mathfrak{A}$ is a model of the same vocabulary. Then $\llbracket \mathcal{T} \rrbracket_\mathfrak{A}$ can be calculated in time polynomial with respect to $|\mathcal{T}|\times |A|$.
\end{lemma}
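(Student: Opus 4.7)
The plan is to compute $\llbracket \mathcal{T} \rrbracket_\mathfrak{A}$ by bottom-up evaluation along the term structure of $\mathcal{T}$, with the key observation being that every subterm has arity at most $k$. To see this, I would proceed by a straightforward structural induction: the relational symbols in $\tau$ have arity at most $k$ by hypothesis, the operators $\neg$, $E$ and $\cap$ preserve the maximum arity of their arguments (for $\cap$ this relies on the fact that mismatched arities collapse to $\bot_0^A$, which has arity $0$), and $\exists$ strictly decreases arity. Consequently, for every subterm $\mathcal{P}$ of $\mathcal{T}$ the AD-relation $\llbracket \mathcal{P} \rrbracket_\mathfrak{A}$ consists of at most $|A|^k$ tuples and can be stored, for instance as an enumerated list, in size polynomial in $|A|$ (with $k$ a fixed constant).

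Next I would spell out that each operator can be evaluated from the interpretations of its arguments in time polynomial in $|A|$. For a relational symbol $R$ we just read $(R^\mathfrak{A}, ar(R))$ off the encoding of $\mathfrak{A}$. For $\neg(X,m)$ we enumerate $A^m$ and output the tuples missing from $X$; for $\cap$ we first compare the two input arities (returning $\bot_0^A$ on mismatch) and then intersect the lists; for $E$ we filter tuples whose last two coordinates coincide; for $\exists$ we project the input onto its first $m-1$ coordinates and deduplicate. In every case the work is bounded polynomially in $|A|^k$, hence polynomially in $|A|$ for fixed $k$.

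Finally I would combine these two observations: $\mathcal{T}$ has at most $|\mathcal{T}|$ subterms, so evaluating them all bottom-up takes time polynomial in $|\mathcal{T}| \times |A|$, giving the claimed bound. The argument is essentially routine and there is no serious obstacle to overcome; the only point where the bounded-arity assumption on $\tau$ is genuinely needed is the complementation step, since without it $\neg$ could produce an interpretation of size $|A|^m$ for arbitrarily large $m$ and defeat the polynomial bound. One minor thing worth emphasising in the write-up is why the arity invariant is preserved by $\cap$: if $\cap$ were applied to terms of differing arities the output arity would drop to $0$ rather than jump to something larger, so the induction indeed goes through cleanly.
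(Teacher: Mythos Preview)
Your proposal is correct and follows essentially the same approach as the paper: both argue by structural induction that no operator in $\{E,\neg,\cap,\exists\}$ increases arity, hence every subterm has arity at most $k$ and interpretation of size at most $|A|^k$, from which the polynomial-time bottom-up evaluation follows. Your write-up is in fact more detailed than the paper's, which leaves the per-operator cost and the $\cap$ arity-mismatch case implicit.
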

\begin{proof}
    Let $\tau$ be the bounded vocabulary and $\mathfrak{A}$ be a model over $\tau$. Since none of the operators in the set $\{E,\neg,\cap,\exists\}$ increases the arity of the terms, one can use an easy induction over the structure of the terms $\mathcal{T} \in \mathrm{GRA}(E,\neg,\cap,\exists)[\tau]$ to establish that $ar(\llbracket \mathcal{T} \rrbracket_\mathfrak{A}) \leq k$. In particular the size of $\llbracket \mathcal{T} \rrbracket_\mathfrak{A}$ is bounded above by $|A|^k$. Since $k$ was a fixed constant, this implies that the size of the interpretation of each term over $\mathfrak{A}$ is polynomially bounded with respect to the size of the model. Hence it is clear that the interpretation $\llbracket \mathcal{T} \rrbracket_\mathfrak{A}$ can be calculated in time polynomial with respect to $|\mathcal{T}| \times |A|$.
\end{proof}

\begin{theorem}
    The satisfiability problem for $\mathrm{GRA}(E,\neg,\cap,\exists)$ over bounded vocabularies is \textsc{NP}-complete.
\end{theorem}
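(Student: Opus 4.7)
The plan is to establish the NP upper bound and NP-hardness separately, relying only on the tools developed earlier in this section together with a standard reduction.

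For the upper bound, the strategy is to combine the polynomial-size model property from Theorem \ref{linearlyorderedpolynomialmodel} with the preceding model-checking lemma. Given a term $\mathcal{T} \in \mathrm{GRA}(E,\neg,\cap,\exists)[\tau]$ over a vocabulary $\tau$ of arity at most $k$, the NP algorithm first invokes the normal-form procedure from Section 4 to produce, along one nondeterministic branch, an equi-satisfiable normal-form term $\mathcal{T}'$. The fresh relational symbols introduced only replace subterms of $\mathcal{T}$, so the arity bound is preserved (and in any case stays within $k+1$ via the alternative deterministic construction noted in the remark after the normal form lemma). Since $\mathrm{GRA}(E,\neg,\cap,\exists) \subseteq \mathrm{GRA}(I,E,\neg,\cap,\exists)$, Theorem \ref{linearlyorderedpolynomialmodel} guarantees that if $\mathcal{T}'$ is satisfiable then it has a model $\mathfrak{B}$ of size polynomial in $|\mathcal{T}'|$; because the arity is bounded, the representation of $\mathfrak{B}$ likewise has size polynomial in $|\mathcal{T}|$. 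The algorithm then guesses such a $\mathfrak{B}$ and verifies $\mathfrak{B} \models \mathcal{T}'$ in polynomial time by computing $\llbracket \mathcal{T}' \rrbracket_\mathfrak{B}$ as in the preceding lemma.

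For the lower bound, I will reduce \textsc{3SAT} to the satisfiability problem over a vocabulary consisting of unary predicates only, which is clearly bounded. Given a 3CNF formula $\phi = \bigwedge_{j=1}^m C_j$ over propositional variables $p_1,\ldots,p_n$ with clauses $C_j = \ell_{j,1} \vee \ell_{j,2} \vee \ell_{j,3}$, introduce a fresh unary predicate $P_i$ for each $p_i$, and for each literal $\ell$ set $\widetilde{\ell} := P_i$ if $\ell = p_i$ and $\widetilde{\ell} := \neg P_i$ if $\ell = \neg p_i$. Define
\[
\mathcal{T}_\phi := \exists \bigcap_{j=1}^m \bigl(\widetilde{\ell}_{j,1} \cup \widetilde{\ell}_{j,2} \cup \widetilde{\ell}_{j,3}\bigr),
\]
where $\cup$ is definable from $\neg$ and $\cap$ in the standard way. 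Any satisfying assignment for $\phi$ yields a one-element model of $\mathcal{T}_\phi$ in which $P_i$ holds at the unique element iff $p_i$ is true; conversely, any witness for the outer $\exists$ in a model of $\mathcal{T}_\phi$ induces a satisfying assignment. The reduction is evidently polynomial.

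No step appears especially delicate, because both main ingredients are already in hand. The only points that require brief verification are that the normal-form conversion preserves arity bounds and that the representation of the guessed model is of polynomial size; both follow easily, since for a fixed maximum arity $k$ the quantity $|B|^k$ is polynomial in $|B|$ and hence polynomial in $|\mathcal{T}|$.
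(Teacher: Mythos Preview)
Your proposal is correct and follows essentially the same approach as the paper: guess a polynomial-size model (via Theorem \ref{linearlyorderedpolynomialmodel}) and verify it using the model-checking lemma for the upper bound, and reduce propositional satisfiability over a unary vocabulary for the lower bound. The paper's proof is terser and omits the explicit normal-form step and the detailed 3SAT encoding, but the ideas are the same; your parenthetical remark about the deterministic normal-form construction keeping arity within $k+1$ is inessential (and not quite accurate, since the required arity increase depends on the nesting depth of $\exists$), but your primary observation that the nondeterministic procedure preserves the arity bound is correct and suffices.
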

\begin{proof}
    Let's start with the upper bound. Given a term $\mathcal{T}$ in normal form, a non-deterministic algorithm can simply guess a model of size polynomial with respect to $|\mathcal{T}|$ and check that it is a model of $\mathcal{T}$. Note that our assumption guarantees that not only is the size of the model polynomially bounded with respect to $|\mathcal{T}|$ but also the description of the model, which together with the previous lemma implies that this algorithm runs in polynomial time. For the lower bound, one can reduce boolean satisfiability problem to the satisfiability problem of $\mathrm{GRA}(\neg,\cap,\exists)$ over unary vocabularies.
\end{proof}

In the case where the vocabulary is not assumed to be bounded, the complexity of the ordered logic turns out to be \textsc{Pspace}-complete. To prove hardness, we will reduce the satisfiability problem of modal logic over serial frames to that of $\mathrm{GRA}(\neg,\cap,\exists)$.

\begin{lemma}
    The satisfiability problem for $\mathrm{GRA}(\neg,\cap,\exists)$ is \textsc{Pspace}-hard.
\end{lemma}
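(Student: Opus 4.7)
The plan is to reduce the satisfiability problem for propositional modal logic over serial frames --- which is known to be \textsc{Pspace}-complete --- to the satisfiability problem for $\mathrm{GRA}(\neg,\cap,\exists)$, equivalently the sentential fragment of $\mathrm{OL}$. The principal obstacle is that in ordered logic every atomic formula must use a prefix of the variable sequence starting at $v_1$, so the natural modal-translation clause $\exists v_2(R(v_1,v_2)\land \chi(v_2))$ for $\Diamond\chi$ cannot be written directly, because the recursive ``$\chi(v_2)$'' would require atoms based at $v_2$.

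I would work around this by raising the arity of every auxiliary predicate to reflect the depth at which it is used. Given an input modal formula $\varphi$ of modal depth $D$, for each $k\in\{1,\ldots,D+1\}$ and each modal subformula $\psi$ of $\varphi$ I would introduce a fresh $k$-ary predicate $Q_\psi^k$ with intended meaning ``$\psi$ holds at $v_k$'', together with a $k$-ary accessibility predicate $E_k$ with intended meaning ``there is an $R$-edge from $v_{k-1}$ to $v_k$''. The target sentence $\Phi$ would be the conjunction of the initial clause $\exists v_1\, Q_\varphi^1(v_1)$; for each $k\le D$ a seriality axiom $\forall v_1\ldots v_k\, \exists v_{k+1}\, E_{k+1}(v_1,\ldots,v_{k+1})$; and, for each subformula $\psi$ and each relevant $k$, a bi-implication recursively defining $Q_\psi^k$ from its immediate subformulas, whose $\Diamond\chi$-case reads
\[\forall v_1\ldots v_k\,\bigl(Q_{\Diamond\chi}^k(v_1,\ldots,v_k) \leftrightarrow \exists v_{k+1}\bigl(E_{k+1}(v_1,\ldots,v_{k+1}) \land Q_\chi^{k+1}(v_1,\ldots,v_{k+1})\bigr)\bigr).\]
Every atom appearing in $\Phi$ uses a prefix of $(v_1,v_2,\ldots)$ starting at $v_1$, so $\Phi\in\mathrm{OL}$, and $|\Phi|$ is polynomial in $|\varphi|$.

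I would then prove equisatisfiability. For the forward direction, given a serial Kripke model $\mathcal{M}$ with $\mathcal{M},w_0\models\varphi$, the corresponding OL-structure on the same domain is obtained by setting $E_k(w_1,\ldots,w_k)$ to hold iff $R^\mathcal{M}(w_{k-1},w_k)$, and $Q_\psi^k(w_1,\ldots,w_k)$ to hold iff $\mathcal{M},w_k\models\psi$; the axioms of $\Phi$ then follow immediately from the Kripke semantics, and seriality of each $E_k$ is inherited from $\mathcal{M}$. For the backward direction, given $\mathfrak{A}\models\Phi$ and a witness $w_0$ with $Q_\varphi^1(w_0)$, I would build a tree-shaped Kripke model whose worlds are the finite sequences $(w_0,w_1,\ldots,w_k)$ produced by iteratively following witnesses of the seriality / existential axioms via the $E_{i+1}$-predicates, with propositional valuations read off from the $Q_p^k$ and accessibility given by one-step extension. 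A straightforward induction on modal subformulas then shows $\psi$ holds at $(w_0,\ldots,w_k)$ iff $Q_\psi^k(w_0,\ldots,w_k)$ holds in $\mathfrak{A}$, and attaching self-loops at depth $D+1$ makes the Kripke model serial without changing the truth value of any subformula of $\varphi$.

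The delicate point that I expect to require the most care is the observation that we impose \emph{no} coherence whatsoever between $E_k$-predicates at different depths, nor between different instances of $Q_\psi^k$; the reduction nonetheless goes through because the backward direction builds a tree in which each node sits at a unique depth, so the $k$-indexed predicate for that depth already supplies all the information needed at that node, with no risk of conflict. Once this insight is granted, the bookkeeping (polynomially many predicates indexed by subformula and depth, each axiom of size $O(|\varphi|)$) shows that the reduction runs in polynomial time, yielding \textsc{Pspace}-hardness.
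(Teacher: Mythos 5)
Your reduction is correct, and it starts from the same source problem as the paper (satisfiability of modal logic over serial frames) and rests on the same key trick: indexing predicates by modal depth and letting their arity grow with the depth, so that each level of modal nesting lives at a fresh variable position and the ordered discipline is never violated. Where you differ is in the implementation. The paper's translation is compositional and predicate-lean: it introduces only the depth-indexed copies $R_{i,k}$ of the propositional symbols, has \emph{no} accessibility predicates at all (the clause $t_k(\lozenge\psi)=\exists t_{k+1}(\psi)$ makes every one-step tuple extension an edge, which is exactly why serial frames are the natural source), and in the converse direction builds the first-order model on the set of choice functions $f\subseteq S$ rather than on a tree. Your version is a Tseitin/Scott-style axiomatization with explicit subformula predicates $Q^k_\psi$, explicit edge predicates $E_k$, defining bi-implications and seriality axioms, and a tree-shaped Kripke model with self-loops at the leaves in the backward direction. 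Both are polynomial and both respect the uniformity constraint of $\mathrm{GRA}(\neg,\cap,\exists)$ (in your bi-implications both sides are $k$-ary, and $E_{k+1}$, $Q^{k+1}_\chi$ are both $(k+1)$-ary, so the sentence translates directly into the algebra); your approach costs more axioms and a little bookkeeping (restricting to pairs $(\psi,k)$ with $k+\mathrm{md}(\psi)\leq D+1$ so that $Q^{D+2}_\chi$ is never mentioned, and observing, as you do, that no coherence across depths is needed), while the paper's buys a shorter formula and a slicker, if less routine, converse construction.
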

\begin{proof}
    We will give a reduction from the satisfiability problem of modal logic over serial frames. Fix a sentence $\varphi \in \mathrm{ML}$. Let $\{p_1,...,p_n\}$ be the set of propositional symbols occurring in $\varphi$ and let $d$ be the modal depth of $\varphi$. For every $1\leq i\leq n$ and $1\leq k\leq d + 1$ we will reserve a $k$-ary relational symbol $R_{i,k}$. Next we will define a family of translations $(t_k)_{1\leq k\leq d + 1}:\mathrm{Subf}(\varphi) \to \mathrm{GRA}(\neg,\cap,\exists)$ recursively as follows:
    \begin{enumerate}
        \item $t_k(p_i) = R_{i,k}$
        \item $t_k(\neg \psi) = \neg t_k(\psi)$
        \item $t_k(\psi \land \chi) = t_k(\psi) \cap t_k(\chi)$
        \item $t_k(\lozenge \psi) = \exists t_{k+1}(\psi)$
    \end{enumerate}
    Our goal is to prove now that $\exists t_1(\varphi)$ is satisfiable if and only if $\varphi$ is.
    
    Suppose first that $\exists t_1(\varphi)$ is satisfiable. Let $\mathfrak{A}$ be a model and let $a\in A$ be an element so that $a\in \llbracket t_1(\varphi) \rrbracket_\mathfrak{A}$. We can now define a Kripke model $\mathfrak{M} = (W,S,V)$ as follows.
    \begin{enumerate}
        \item $W = \{\overline{a} \mid \overline{a} \in A^{<\omega}\}$.
        \item $S = \{(\overline{a},\overline{a}b) \mid \overline{a}b \in A^{<\omega}\}$.
        \item For every $p_i$ we define
        \[V(p_i) = \bigcup_{1\leq k \leq d + 1} R_{i,k}^\mathfrak{A}\]
    \end{enumerate}
    Clearly $\mathfrak{M}$ is a serial Kripke model. Using a straightforward induction, one can prove that for every $\psi \in \mathrm{Subf}(\varphi)$ of modal depth $k$ and for every tuple $\overline{b} \in A^k$ we have that $\mathfrak{M},\overline{b} \vdash \psi$ if and only if $\overline{b} \in \llbracket t_k(\psi) \rrbracket_\mathfrak{A}$. In particular we have that $\mathfrak{M},a\vdash \varphi$ iff $a \in \llbracket t_1(\psi) \rrbracket_\mathfrak{A}$, which - together with our assumption $a\in \llbracket t_1(\varphi) \rrbracket_\mathfrak{A}$ - implies that $\mathfrak{M},a \vdash \varphi$.
    
    Suppose then that $\varphi$ is satisfiable. Let $\mathfrak{M} = (W,S,V)$ be a Kripke model and let $w_0\in W$ be a world for which $\mathfrak{M},w_0 \vdash \varphi$. We define $\mathcal{F}$ to be the following set
    \[\mathcal{F} = \{f:W\to W \mid f \subseteq S\}.\]
    Note that since $S$ is serial, for every $(w,w') \in S$ there exists $f\in \mathcal{F}$ with the property that $f(w) = w'$. Now fix some function $g\in \mathcal{F}$. We define a model $\mathfrak{A}$ as follows.
    \begin{enumerate}
        \item $A = \mathcal{F}$
        \item For every $1\leq i\leq n$ and $1\leq k \leq d + 1$ we define
        \[R_{i,k}^\mathfrak{A} = \{(g,f_1,...,f_{k-1}) \in A^k \mid (f_{k-1} \circ ... \circ f_1)(w_0) \in V(p_i)\}\]
        Here we agree that if $k = 1$, then $(f_{k-1} \circ ... \circ f_1)(w_0) = w_0$. Thus for every $1\leq i \leq n$ we have that $g \in R_{i,1}^\mathfrak{A}$ if and only if $w_0 \in V(p_i)$.
    \end{enumerate}
    Using a straightforward induction, one can prove that for every $\psi \in \mathrm{Subf}(\varphi)$ of modal depth $k$ and for every $(f_1,...,f_{k-1}) \in A^{k-1}$ we have that $\mathfrak{M},(f_{k-1} \circ ... \circ f_1)(w_0) \vdash \psi$ if and only if $(g,f_1,...,f_{k-1}) \in \llbracket t_k(\psi) \rrbracket_\mathfrak{A}$, which again implies the desired claim.
\end{proof}

The rest of this section will be devoted towards proving the corresponding upper bound for the logic $\mathrm{GRA}(E,\neg,\cap,\exists)$. The basic idea is to use a variant of the well-known Lardner's algorithm, which tries to construct a model $\mathfrak{A}$ for a given term in a depth-first fashion.

The algorithm will start by converting the given term to an equi-satisfiable term in normal form. If the term has a model of size one, then the algorithm accepts. Otherwise the algorithm will initialize a set $A$ of size $2|I'||I|$, which will serve as the domain of the model $\mathfrak{A}$. For every $a\in A$, the algorithm guesses a $1$-type $\pi$ and then assigns it to that element. The algorithm will reject if there exists $i\in I'$ so that for no $a\in A$ it is the case that $tp_\mathfrak{A}(a) \models \kappa_i$. Similarly the algorithm will reject if there is $j\in J'$ and $a\in A$ so that $tp_\mathfrak{A}(a) \not\models \lambda_j$.

After this the algorithm will begin to recursively guess tables for extensions of tuples of elements from $A$. At each stage the algorithm needs to guess extensions for some tuple $\overline{a}\in A^k$. The algorithm will star by guessing witnesses for the existential requirements $\mathcal{T}_i^\exists$ for which $\overline{a} \in \llbracket \alpha_i^\exists\rrbracket_\mathfrak{A}$, which will be done as follows. First the algorithm guesses a set $C\subseteq A$ of size $|I|$ so that $a_k \not\in C$. Then, for every existential requirement $\mathcal{T}_i^\exists$ the algorithm guesses an element $c\in C$ and a table $\rho_i$, and defines the table of $\overline{a}c$ to be $\rho_i$. The algorithm then checks that for every existential requirement $\mathcal{T}_i^\exists$ for which $tp_\mathfrak{A}(\overline{a}) \models \alpha_i^\exists$, we should have that $\rho_i \models \beta_i^\exists$, and if this is not the case then it rejects. After providing existential witnesses, the algorithm will guess tables for the remaining extensions of $\overline{a}$, with the exception of $\overline{a}a_k$, since its table is determined by the table of $\overline{a}$. Now if there is a universal requirement $\mathcal{T}_j^\forall$ and $c\in A$ so that $tp_\mathfrak{A}(\overline{a}) \models \alpha_j^\forall$ but $tp_\mathfrak{A}(\overline{a}c) \not\models \beta_j^\forall$, then the algorithm rejects. Otherwise the algorithm will perform the above operation recursively on the extensions of $\overline{a}$.

Let us then briefly verify that the algorithm only uses polynomial space. First we note that checking whether a given term has a model of size one can be done easily using only polynomial space. Secondly, when guessing the tables for the extensions of a tuple $\overline{a} \in A^k$, the algorithm needs to only remember the set $A$ and its previous guesses for the tables of extensions of the prefixes of $\overline{a}$. Since the size of $A$ is polynomial and the length of $\overline{a}$ is polynomially bounded (since we need to specify tables only for tuples with length at most the maximum arity of a relational symbol that occurs in $\mathcal{T}$), all of these require only polynomial space.

\begin{lemma}
    The above algorithm accepts iff the input term $\mathcal{T}$ is satisfiable.
\end{lemma}
\begin{proof}
    Suppose first that $\mathcal{T}$ has a model $\mathfrak{A}$. If the size of this model is one, then the algorithm clearly accepts. On the other hand, if $\mathfrak{A}$ has size at least two, then the proof of theorem \ref{linearlyorderedpolynomialmodel} shows that we can assume it to have a domain of size $2|I'||I|$. Then it is clear that there exists an accepting run for the algorithm, which consists of guessing tables according to the way they are defined in the model $\mathfrak{A}$.
    
    Suppose then that the algorithm accepts the term. If the algorithm accepts because there exists a model of size one, then the claim is clear. In the other case we note that since the algorithm assigns tables for each relevant tuple of elements of $A$, and there are no tuples which receive different tables, we can clearly read off a model from an accepting run of the algorithm.
\end{proof}

Since \textsc{NPspace} = \textsc{Pspace}, the following corollary is immediate.

\begin{corollary}
    The satisfiability problem for $\mathrm{GRA}(E,\neg,\cap,\exists)$ is \textsc{Pspace}-complete.
\end{corollary}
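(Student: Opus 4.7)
The corollary collapses into a straightforward combination of two ingredients that have already been assembled. The plan is as follows.

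For the lower bound, the preceding lemma shows that the satisfiability problem for $\mathrm{GRA}(\neg,\cap,\exists)$ is already \textsc{Pspace}-hard. Since $\mathrm{GRA}(\neg,\cap,\exists)$ is a syntactic fragment of $\mathrm{GRA}(E,\neg,\cap,\exists)$ (every term of the former is literally a term of the latter, and the definitions of $\neg^A,\cap^A,\exists^A$ are the same in both algebras), the identity reduction transports \textsc{Pspace}-hardness to our logic. So nothing new needs to be proved on the lower-bound side.

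For the upper bound, the previous lemma established that the nondeterministic depth-first procedure described above accepts a term $\mathcal{T}$ in normal form if and only if $\mathcal{T}$ is satisfiable. I would briefly reiterate that this procedure uses only polynomial space: the domain $A$ has size $2|I'||I|$, and at each recursive stage the algorithm needs to store only the current tuple $\overline{a}$ (whose length is bounded by the maximum arity occurring in $\mathcal{T}$, hence polynomial in $|\mathcal{T}|$), together with the tables guessed for the prefixes of $\overline{a}$ and the local guesses for witnesses and for the extensions of $\overline{a}$. Each table is a set of terms of size polynomial in $|\mathcal{T}|$, so the entire working tape is polynomially bounded. Converting $\mathcal{T}$ to normal form by Lemma~4.5 happens once at the start of the computation in polynomial time. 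Therefore the algorithm places the satisfiability problem in \textsc{NPspace}.

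The main (and only) remaining step is to invoke Savitch's theorem, which gives $\textsc{NPspace} = \textsc{Pspace}$, to conclude that satisfiability for $\mathrm{GRA}(E,\neg,\cap,\exists)$ is in \textsc{Pspace}. Combined with the matching lower bound, this yields \textsc{Pspace}-completeness. There is essentially no obstacle here: the genuine work was done in Theorem~\ref{linearlyorderedpolynomialmodel} (polynomial domain bound), in the polynomial-space analysis of the Ladner-style algorithm, and in the modal-logic reduction; the corollary merely records their combination.
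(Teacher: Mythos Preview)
Your proposal is correct and follows exactly the paper's approach: the lower bound comes from the earlier \textsc{Pspace}-hardness lemma for $\mathrm{GRA}(\neg,\cap,\exists)$, the upper bound from the correctness and polynomial-space analysis of the Ladner-style algorithm, and the final step is Savitch's theorem collapsing \textsc{NPspace} to \textsc{Pspace}. The paper in fact states the corollary with a single sentence (``Since $\textsc{NPspace} = \textsc{Pspace}$, the following corollary is immediate''), so your write-up is simply a slightly more detailed rendering of the same argument.
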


\section{Further extensions of ordered logic}

In this section we will study extensions of ordered logic which are obtained by adding either the swap or the one-dimensional intersection (or both) into its syntax. It turns out that we can deduce easily from the literature sharp lower bounds for the relevant fragments, while the upper bounds require some work.

It follows rather directly from the literature that the satisfiability problem for $\mathrm{GRA}(\neg,C,\cap,\exists)$ is \textsc{NexpTime}-hard already over vocabularies with at most binary relational symbols. For instance, one can easily reduce the satisfiability problem for \emph{boolean modal logic} to that of $\mathrm{GRA}(\neg,C,\cap,\exists)$. Since the former is \textsc{NexpTime}-hard \cite{Lutz2000TheCO}, the following claim is immediate.

\begin{theorem}\label{chardness}
    The satisfiability problem is \textsc{NexpTime}-hard for $\mathrm{GRA}(\neg,C,\cap,\exists)$.
\end{theorem}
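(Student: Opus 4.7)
The plan is to reduce the satisfiability problem of \emph{boolean modal logic} (BML) to that of $\mathrm{GRA}(\neg,C,\cap,\exists)$. Since BML satisfiability is \textsc{NexpTime}-hard by \cite{Lutz2000TheCO}, exhibiting a polynomial-time satisfiability-preserving translation from BML to $0$-ary terms of $\mathrm{GRA}(\neg,C,\cap,\exists)$ over vocabularies of arity at most two will suffice.

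Recall that a BML formula is built from propositional atoms $p_i$ using Boolean connectives together with modalities $\lozenge_\alpha$, where each $\alpha$ is a Boolean combination of atomic accessibility relation symbols $R_j$. A Kripke structure for BML is simply a relational structure over a vocabulary consisting of unary symbols (for propositions) and binary symbols (for atomic accessibility relations), so no translation of models is required. I would introduce two translations in parallel. The first, $\tau_2$, sends Boolean combinations of relations to binary terms: $\tau_2(R_j) = R_j$, $\tau_2(\neg \alpha) = \neg \tau_2(\alpha)$ and $\tau_2(\alpha \cap \beta) = \tau_2(\alpha) \cap \tau_2(\beta)$; here $\cap$ is applicable since both operands remain binary. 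The second, $\tau$, sends BML formulas to unary terms: $\tau(p_i) = P_i$, $\tau(\neg \varphi) = \neg \tau(\varphi)$, $\tau(\varphi \land \psi) = \tau(\varphi) \cap \tau(\psi)$, and crucially
\[
\tau(\lozenge_\alpha \varphi) \;=\; \exists\, C(\tau_2(\alpha),\, \tau(\varphi)).
\]
Here $C$ restricts the binary relation $\tau_2(\alpha)$ by requiring its second coordinate to lie in the unary set $\tau(\varphi)$, and $\exists$ then projects that last coordinate away, yielding the set of points possessing an $\alpha$-successor satisfying $\varphi$.

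The correctness claim is that for every relational structure $\mathfrak{A}$, viewed as a Kripke model, and every $a \in A$,
\[
\mathfrak{A},a \models \varphi \iff a \in \llbracket \tau(\varphi) \rrbracket_\mathfrak{A},
\]
together with the subsidiary equality $\llbracket \tau_2(\alpha) \rrbracket_\mathfrak{A} = \alpha^\mathfrak{A}$. Both are proved by a routine simultaneous induction on the structure of $\varphi$ and $\alpha$; only the case $\lozenge_\alpha \varphi$ is of real interest, and there the unpacking of $\exists\, C$ described above matches the standard semantics of $\lozenge_\alpha$ verbatim. Hence $\varphi$ is BML-satisfiable iff the $0$-ary term $\exists\, \tau(\varphi)$ is satisfiable in $\mathrm{GRA}(\neg,C,\cap,\exists)$, and since $|\tau(\varphi)| = O(|\varphi|)$, the reduction is polynomial, yielding the desired lower bound.

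There is no substantial obstacle in this proof; the only point requiring attention is careful bookkeeping of arities. Boolean combinations of accessibility relations have to be performed at the binary level using $\neg$ and $\cap$ (both arity-preserving), while the interaction between a binary relation and a unary predicate demands the dedicated operator $C$, since the standard intersection $\cap$ requires its two arguments to have equal arity. Once this distinction is kept straight, the verification is entirely mechanical.
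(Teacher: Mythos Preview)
Your proposal is correct and follows exactly the approach the paper indicates: a polynomial reduction from boolean modal logic satisfiability, citing \cite{Lutz2000TheCO} for the \textsc{NexpTime}-hardness of the latter. The paper itself merely asserts that this reduction is easy and leaves out the details; your explicit translations $\tau_2$ and $\tau$, with $\tau(\lozenge_\alpha\varphi)=\exists\,C(\tau_2(\alpha),\tau(\varphi))$, are precisely the natural way to spell it out.
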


Concerning the addition of the swap operator, it is again not difficult to show that the satisfiability problem for $\mathrm{GRA}(s,\neg,\cap,\exists)$ is \textsc{NexpTime}-hard. Here we will prove it by reducing the satisfiability problem of $\mathrm{S5}^2$ to that of $\mathrm{GRA}(s,\neg,\cap,\exists)$, which was proved to be \textsc{NexpTime}-hard in \cite{complexityS52}.

\begin{theorem}\label{swaphardness}
    The satisfiability problem is \textsc{NexpTime}-hard for $\mathrm{GRA}(s,\neg,\cap,\exists)$.
\end{theorem}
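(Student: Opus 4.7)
The plan is to reduce the satisfiability problem of the product modal logic $\mathrm{S5}^2$, which is \textsc{NexpTime}-hard by \cite{complexityS52}, to the satisfiability problem of $\mathrm{GRA}(s,\neg,\cap,\exists)$. Given an $\mathrm{S5}^2$-formula $\varphi$, introduce a fresh binary relation symbol $Q_\psi$ for each subformula $\psi$ of $\varphi$ (identifying $Q_p$ with the binary symbol already reserved for the propositional variable $p$). The intended meaning is that $(a,b) \in Q_\psi$ iff the world $(a,b)$ of an underlying $\mathrm{S5}^2$-model satisfies $\psi$. The reduction outputs the $0$-ary term
\[\exists \exists Q_\varphi \;\cap\; \bigcap_{\psi \in \mathrm{Sub}(\varphi)} \mathrm{Ax}(\psi),\]
asserting both that $\varphi$ holds at some world and that each $Q_\psi$ has the intended semantics. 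Since $|\mathrm{Sub}(\varphi)| = O(|\varphi|)$ and each $\mathrm{Ax}(\psi)$ has constant size, the reduction is polynomial.

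For the boolean cases, $\mathrm{Ax}(\neg \psi)$ and $\mathrm{Ax}(\psi \land \chi)$ are the obvious universally-quantified biconditionals, expressible as $0$-ary terms since all relevant subterms are binary, so $\cap$ and $\cup$ apply directly and two rounds of $\neg \exists \neg$ close them off. The interesting cases are the boxes. For $\Box_2 \psi$, we want $Q_{\Box_2 \psi}(v_1,v_2) \leftrightarrow \forall v_2'\, Q_\psi(v_1,v_2')$; but the right-hand side is a unary term on $v_1$ while the left-hand side is binary, and arity mismatch forbids stating this equivalence directly. Split it instead into a \emph{constancy axiom} $\neg \exists \neg [(\neg \exists \neg Q_{\Box_2 \psi}) \cup (\neg \exists Q_{\Box_2 \psi})]$ forcing $Q_{\Box_2 \psi}(v_1,-)$ to be either uniformly true or uniformly false for each $v_1$, and an \emph{agreement axiom} $\neg \exists \neg [(\neg \exists \neg Q_\psi) \leftrightarrow (\neg \exists \neg Q_{\Box_2 \psi})]$ matching their universals over the last coordinate; a routine check shows the two together imply the desired biconditional. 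For $\Box_1 \psi$ the analogue uses the identity that $\neg \exists \neg s T$ evaluated at $v$ equals $\forall v'\, T(v',v)$: replace each $Q$ above by $sQ$ throughout.

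For correctness, if $\varphi$ is $\mathrm{S5}^2$-satisfiable then (after identifying the two factors) one may take a model on $W \times W$; setting $A := W$ and $Q_\psi^{\mathfrak{A}} := \{(a,b) : \mathfrak{M},(a,b) \models \psi\}$ verifies every axiom by induction on $\psi$, and $\exists \exists Q_\varphi$ holds at the witnessing world. Conversely, given $\mathfrak{A}$ satisfying the reduction, set $W := A$ and $V(p) := Q_p^{\mathfrak{A}}$; a straightforward induction on $\psi$, using the constancy and agreement axioms to handle the box cases, shows $\mathfrak{M},(a,b) \models \psi \iff (a,b) \in Q_\psi^{\mathfrak{A}}$, so $\varphi$ is satisfied at any pair witnessing $\exists \exists Q_\varphi$.

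The main obstacle is handling the modal operators in a framework where $\exists$ projects only the last coordinate and $\cap$ requires matching arities, so the natural binary biconditional defining $Q_{\Box_i \psi}$ is not directly expressible. The constancy-plus-agreement split circumvents this, and the swap operator $s$ is essential (and used only) in the treatment of $\Box_1$, where universal quantification over the first coordinate is needed.
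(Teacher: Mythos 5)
Your proposal is correct and takes essentially the same route as the paper: both reduce $\mathrm{S5}^2$ satisfiability (\textsc{NexpTime}-hard) to $\mathrm{GRA}(s,\neg,\cap,\exists)$ by encoding product worlds as pairs over a single domain, introducing binary relation symbols for modal subformulas, enforcing their semantics through row-constancy and agreement axioms for one modality, and using $s$ to do the same over the first coordinate for the other. The only cosmetic differences are that you name every subformula Tseitin-style and axiomatize boxes, whereas the paper translates the Boolean connectives compositionally and introduces fresh symbols only for diamond subformulas, with a pair of implications that is logically equivalent to your constancy-plus-agreement axioms.
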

\begin{proof}
    Let $\varphi \in \mathrm{S5}^2$. Our goal is to construct a term $\mathcal{T} \in \mathrm{GRA}(s,\neg,\cap,\exists)$ which is satisfiable if and only if $\varphi$ is. We will begin by fixing the set of relational symbols occurring in the term. First, for every propositional symbol $p$ occurring in $\varphi$, we will add a binary relational symbol $P$. Secondly, for each subformula $\varphi$ of the form $\lozenge_i \psi$, we will add a binary relational symbol $S_{\lozenge_i \psi}$. Let $\tau$ denote the resulting vocabulary. Note that $\tau$ consist of relational symbols of arity at most two.
    
    Now we are ready to define recursively a mapping 
    \[t:\mathrm{Subf}(\varphi) \rightarrow \mathrm{GRA}(s,\neg,\cap,\exists)[\tau],\]
    where $\mathrm{Subf}(\varphi)$ denotes the set of subformulas of $\varphi$, as follows.
    \begin{enumerate}
        \item $t(p) = P$, where $p$ is a propositional symbol.
        \item $t(\neg \psi) = \neg t(\psi)$.
        \item $t(\psi \land \chi) = t(\psi) \cap t(\chi)$.
        \item $t(\lozenge_i \psi) = S_{\lozenge_i \psi}$.
    \end{enumerate}
    Then we define
    \[\mathcal{T} := \exists\exists t(\varphi) \cap \bigcap_{\lozenge_1 \psi \in \mathrm{Subf}(\varphi)} \forall ((\neg \exists S_{\lozenge_1 \psi} \cup \exists t(\psi)) \cap (\neg \exists t(\psi) \cup \forall S_{\lozenge_1 \psi}))\]
    \[\cap \bigcap_{\lozenge_2 \psi \in \mathrm{Subf}(\varphi)} \forall ((\neg \exists s S_{\lozenge_2 \psi} \cup \exists s t(\psi)) \cap (\neg \exists s t(\psi) \cup \forall s S_{\lozenge_2 \psi})).\]
    It is straightforward to check that $\varphi$ is satisfiable if and only if $\mathcal{T}$ is.
\end{proof}

\begin{remark}
    It would have been also possible to reduce the satisfiability problem for $\mathrm{GRA}(\neg,C,\cap,\exists)$ to that of $\mathrm{GRA}(s,\neg,\cap,\exists)$. The reason why we chose to prove the \textsc{NexpTime}-hardness via a reduction from $\mathrm{S5}^2$ is that later we can use a very similar translation from $\mathrm{S5}^3$ to prove that the extension of ordered logic with cyclic permutation is undecidable.
\end{remark}

Next we will prove the corresponding upper bounds on the complexities of $\mathrm{GRA}(\neg,C,\cap,\exists)$ and $\mathrm{GRA}(s,\neg,\cap,\exists)$ by showing that their least common extension $\mathrm{GRA}(s,\neg,C,\cap,\exists)$ has the exponentially bounded model property. The core of the argument is the same as the proof of theorem \ref{linearlyorderedpolynomialmodel}, although the presence of $C$ and $s$ forces us to be a bit more careful with the way we provide witnesses for existential requirements.

\begin{theorem}\label{swaprefuniformboundedmodel}
    Let $\mathcal{T} \in \mathrm{GRA}(s,\neg,C,\cap,\exists)$ and suppose that $\mathcal{T}$ is satisfiable. Then $\mathcal{T}$ has a model of size bounded exponentially in $|\mathcal{T}|$.
\end{theorem}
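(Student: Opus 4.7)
The plan is to adapt the inductive construction from the proof of Theorem \ref{linearlyorderedpolynomialmodel}, the new ingredient being that the domain of the target model must contain an element for every $1$-type realized in the starting model, and this is the source of the exponential blow-up. I begin by applying the normal form lemma to reduce to the case where $\mathcal{T}$ is in normal form over the signature $\{s,C,\neg,\cap,\exists\}$, and fix a model $\mathfrak{A} \models \mathcal{T}$. Let $\Pi$ denote the set of $1$-types realized in $\mathfrak{A}$; since every $1$-type is a maximally consistent set of unary subterms of $\mathcal{T}$, we have $|\Pi| \leq 2^{O(|\mathcal{T}|)}$. The domain of $\mathfrak{B}$ is then defined to be $B = \Pi \times I' \times I \times \{0,1\}$, which has exponential size, and the $1$-type of an element $(\pi,i,j,b) \in B$ is declared to be $\pi$. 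I also ensure that for each $i \in I'$ the relevant $1$-type witnessing $\kappa_i$ appears in $\Pi$.

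Tables for tuples of $B$ are defined by induction on length, maintaining the invariant that every $\overline{b} \in B^k$ is similar, with respect to $\{s,C,\neg,\cap\}$, to some $\overline{a} \in A^k$. As noted in the section on tables and normal forms, this similarity is characterized by the equality of $k$-tables with respect to $\{s\}$ together with the equality of the $1$-types of the last two coordinates. Witnesses for existential requirements $\mathcal{T}_i^\exists$ are produced essentially as in the polynomial case: for $\overline{b} \in B^k$ requiring its $i$-th witness, one takes a similar tuple $\overline{a} \in A^k$, finds $c \in A$ with $c \neq a_k$ and $\overline{a}c \in \llbracket \beta_i^\exists \rrbracket_\mathfrak{A}$, and picks as witness an element $d \in B$ whose first coordinate equals $tp_\mathfrak{A}(c) \in \Pi$ and whose remaining coordinates are set, in analogy with Theorem \ref{linearlyorderedpolynomialmodel}, to reserve $d$ as the $i$-th witness of $\overline{b}$ while ensuring $d \neq b_k$. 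The presence of the first coordinate ranging over $\Pi$ is precisely what accommodates the operator $C$: the chosen witness automatically has the $1$-type that the relation $\beta_i^\exists$ may demand of it.

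Two points deserve attention. The operator $C$ requires the extra $1$-type bookkeeping just described; this is the sole source of the exponential bound. The operator $s$ means that assigning a $(k+1)$-table to $\overline{b}d$ also implicitly determines the table of the swap tuple $(b_1,\ldots,b_{k-1},d,b_k)$. This does not cause trouble, because by construction $\overline{b}d$ is made similar to $\overline{a}c$ under $\{s,C,\neg,\cap\}$, so the swap tuple is automatically similar to the corresponding swap in $\mathfrak{A}$, and the invariant is preserved without extra work. Once the construction is complete, every universal requirement holds thanks to the similarity invariant, and every existential requirement is satisfied by the witness assignment, so $\mathfrak{B} \models \mathcal{T}$.

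The main obstacle I anticipate is the combinatorial bookkeeping required to guarantee that, for each tuple $\overline{b}$, the $|I|$ witnesses for its different existential requirements can be chosen pairwise distinct, each of the correct $1$-type, and distinct from $b_k$ (and from $b_{k-1}$, to keep the implicit swap tuple genuinely fresh). The coordinates $I' \times I$ and the bit $\{0,1\}$ in the definition of $B$ are there precisely to provide enough distinct copies of each $1$-type so that this simultaneous disjointness can always be arranged, generalizing the bit-flip trick from Theorem \ref{linearlyorderedpolynomialmodel}. Verifying that these choices can always be carried out without conflict, and that the resulting assignment of tables is internally consistent under the swap operator, is the technical heart of the proof.
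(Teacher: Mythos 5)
Your overall strategy is the same as the paper's: put the realized $1$-types into the domain (this handles $C$ and $s$, whose notion of similarity requires matching $1$-types of the last coordinates), assign tables inductively while maintaining the invariant that every tuple of the new model is similar to a tuple of $\mathfrak{A}$, and when a table is assigned to $\overline{b}d$ also fix, consistently with $\mathfrak{A}$, the table of the swapped tuple $(b_1,\ldots,b_{k-1},d,b_k)$. Up to the harmless extra factor $I'$, your domain has the right shape and the right exponential size.

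However, there is a genuine gap exactly at the point you yourself defer as ``the technical heart'': conflict-freeness of the witness assignment. You keep the two-valued counter $\{0,1\}$ and claim the bit-flip trick of Theorem \ref{linearlyorderedpolynomialmodel} generalizes, but under the swap operator it does not. In Theorem \ref{linearlyorderedpolynomialmodel} it is harmless if, after $d$ (with counter $j+1 \bmod 2$) is used as a witness for $\overline{b}$ (with $b_k$ carrying counter $j$), the element $b_k$ is later chosen as a witness for the tuple $(b_1,\ldots,b_{k-1},d)$ --- the tuples $(b_1,\ldots,b_{k-1},b_k,d)$ and $(b_1,\ldots,b_{k-1},d,b_k)$ carry independent tables when $s$ is absent. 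With $s$ in the signature, the table you assign to $\overline{b}d$ already determines the table of $(b_1,\ldots,b_{k-1},d,b_k)$, and with a $\bmod\ 2$ counter a later witness for $(b_1,\ldots,b_{k-1},d)$ has counter $j+2 \equiv j$, so it can coincide with $b_k$ (its $1$-type and remaining coordinates can match, since the witness's type coordinate is dictated by the witness found in $\mathfrak{A}$ and the $I$-coordinate by the requirement index); that second assignment can contradict the swap-determined table, and nothing in your setup rules this out. The paper's fix is precisely to replace $\{0,1\}$ by a three-valued cyclic counter $\{0,1,2\}$ with witness counter $j+1 \bmod 3$: then any witness for a tuple ending in $d$ has counter $j+2 \not\equiv j$, so it can never be $b_k$, and the swap-determined table is never redefined. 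So you have identified the right obstacle, but the mechanism you propose (inheriting the Theorem \ref{linearlyorderedpolynomialmodel} coordinates and bit) does not overcome it; the small but essential new idea is the length-three cycle (or an equivalent avoidance rule), which your proof is missing.
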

\begin{proof}
    Let $\mathcal{T} \in \mathrm{GRA}(s,\neg,C,\cap,\exists)$ be a term in normal form and let $\mathfrak{A}$ be a model of $\mathcal{T}$. Our goal is to construct a bounded model $\mathfrak{B}$ so that $\mathfrak{B} \models \mathcal{T}$. As the domain of the model $\mathfrak{B}$ we will take the set
    \[B := \{tp_\mathfrak{A}(a) \mid a \in A\} \times I \times \{0,1,2\}.\]
    Clearly $|B| \leq 2^{O(|\mathcal{T}|)}$. Again, to construct the model, we will need to specify the tables for all the $k$-tuples of elements from $B$. We will follow the same strategy as in the proof of theorem \ref{linearlyorderedpolynomialmodel}, i.e. the tables will be specified inductively while maintaining the condition that for every $\overline{b}\in B^k$ for which $tp_\mathfrak{B}(\overline{b})$ has been specified, there exists $\overline{a} \in A^k$ which is similar to $\overline{b}$.
    
    We will start with the $1$-types. For every $b = (tp_\mathfrak{A}(a),i,j) \in B$ we define that $tp_\mathfrak{B}(b) := tp_\mathfrak{A}(a)$. Suppose then that we have defined the tables for $k$-tuples. We start defining the tables for $(k+1)$-tuples by providing witnesses for all the relevant tuples. So, consider an existential requirement $\mathcal{T}_i^\exists$ and a tuple $\overline{b}\in B^k$ so that $\overline{b}\in \llbracket \alpha_i \rrbracket_\mathfrak{B}$. Suppose that $b_k = (tp_\mathfrak{A}(a),i',j)$. By construction there exists a tuple $\overline{a}\in A^k$ so that $\overline{b}$ and $\overline{a}$ are similar. Thus $\overline{a}\in \llbracket \alpha_i\rrbracket_\mathfrak{A}$. Since $\mathfrak{A}\models \mathcal{T}_i^\exists$, there exists an element $c\in A$ which is a witness for $\overline{a}$ and $\mathcal{T}_i^\exists$. We will use the element $d = (tp_\mathfrak{A}(c),i,j+1 \mod 3) \in B$ as a witness for $\overline{b}$ by defining that $tp_\mathfrak{B}(\overline{b}d) := tp_\mathfrak{A}(\overline{a}c)$ and $tp_\mathfrak{B}((b_1,...,b_{k-1},d,b_k)) := tp_\mathfrak{A}((a_1,...,a_{k-1},c,a_k))$.
    
    Before moving forward, let us argue that our method of assigning witnesses does not produce conflicts. So consider a tuple $\overline{b} = (b_1,...,b_k) \in B^k$ and $d\in B$ so that we used $d$ as a witness for $\overline{b}$ and some existential requirement $\mathcal{T}_i^\exists$. We will argue that the table for the tuple $\overline{b}d$ was not defined in two different ways. First we note that we have reserved distinct elements for each of the existential requirements, and thus we used $d$ as a witness for $\overline{b}$ only for the existential requirement $\mathcal{T}_i^\exists$. We then note that since we are assigning witnesses for tuples in a "cyclic" manner, we will not use $b_k$ as a witness for the tuple $(b_1,...,b_{k-1},d)$. Since these cases are the only possible ways that we might have defined the table of the tuple $\overline{b}d$ in two different ways, we conclude that it is only defined once.
    
    We will now assign tables for the remaining $(k+1)$-tuples. So, consider a tuple $\overline{b}\in B^k$ and $d = (tp_\mathfrak{A}(c),i,j)\in B$ so that we have not defined the table for the tuple $\overline{b}d$. By construction there exists a tuple $\overline{a}\in A^k$ which is similar to $\overline{b}$. Let $c\in A$ be an element that realizes the $1$-type of $d$ (and which is not necessarily distinct from $a_k$). Now we define that $tp_\mathfrak{B}(\overline{b}d) := tp_\mathfrak{A}(\overline{a}c)$ and that $tp_\mathfrak{B}((b_1,...,b_{k-1},d,b_k)) = tp_\mathfrak{A}((a_1,...,a_{k-1},c,a_k))$.
\end{proof}

\begin{corollary}
    The satisfiability problem for $\mathrm{GRA}(s,\neg,C,\cap,\exists)$ is \textsc{NexpTime}-complete.
\end{corollary}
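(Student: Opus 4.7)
The plan is to combine the exponentially bounded model property from Theorem \ref{swaprefuniformboundedmodel} with the hardness results already established earlier in the section. The lower bound is immediate: since $\mathrm{GRA}(s,\neg,C,\cap,\exists)$ contains both $\mathrm{GRA}(\neg,C,\cap,\exists)$ and $\mathrm{GRA}(s,\neg,\cap,\exists)$ as sub-algebras, \textsc{NexpTime}-hardness follows directly from either Theorem \ref{chardness} or Theorem \ref{swaphardness}. So the only remaining task is to establish the \textsc{NexpTime} upper bound.

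For the upper bound, I would proceed as follows. Given an input term $\mathcal{T} \in \mathrm{GRA}(s,\neg,C,\cap,\exists)$, first apply the polynomial time nondeterministic normal form procedure from the lemma in Section 4 (with $\mathcal{F} = \{s\}$) to obtain an equi-satisfiable term $\mathcal{T}'$ in normal form over a possibly extended vocabulary. By Theorem \ref{swaprefuniformboundedmodel}, if $\mathcal{T}'$ is satisfiable, then it has a model $\mathfrak{B}$ with $|B| \leq 2^{O(|\mathcal{T}'|)} = 2^{O(|\mathcal{T}|)}$. The algorithm then nondeterministically guesses such a model $\mathfrak{B}$ (specifying the interpretation of every relational symbol of $\mathcal{T}'$ on $B$) and verifies that $\mathfrak{B} \models \mathcal{T}'$.

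The verification step needs a small complexity argument. Let $k$ be the maximum arity appearing in $\mathcal{T}'$; note $k \leq |\mathcal{T}|$. Since none of the operators $s,\neg,C,\cap,\exists$ increases arity beyond the arities of its inputs, an easy induction on terms shows that for every subterm $\mathcal{P}$ of $\mathcal{T}'$ we have $ar(\llbracket \mathcal{P} \rrbracket_\mathfrak{B}) \leq k$. Consequently $|\llbracket \mathcal{P} \rrbracket_\mathfrak{B}| \leq |B|^k \leq 2^{O(|\mathcal{T}|^2)}$, and one can compute $\llbracket \mathcal{P} \rrbracket_\mathfrak{B}$ bottom-up in time polynomial in $|\mathcal{T}| \times |B|^k$, which is still bounded by $2^{O(|\mathcal{T}|^2)}$. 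Thus the whole nondeterministic procedure runs in exponential time, placing the problem in \textsc{NexpTime}.

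The main obstacle, such as it is, lies in making sure the verification remains within \textsc{NexpTime}: one has to check that the arities of intermediate AD-relations are uniformly bounded by a polynomial in $|\mathcal{T}|$, so that the size of each intermediate interpretation stays singly exponential. The operator $C$ could in principle raise a concern (it inspects unary relations as side constraints), but since its output arity is the larger of the two input arities and the $\cap$ operator only produces nonempty outputs when arities match, no arity blowup occurs; the induction goes through cleanly and the \textsc{NexpTime} bound follows.
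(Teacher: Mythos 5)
Your proposal is correct and follows exactly the route the paper intends: \textsc{NexpTime}-hardness is inherited from Theorems \ref{chardness} and \ref{swaphardness}, and the upper bound is the standard guess-and-check argument combining the normal form procedure with the exponential model property of Theorem \ref{swaprefuniformboundedmodel}, with model checking feasible since none of the operators increases arity. The paper treats the corollary as immediate on these grounds, so your elaboration of the verification step is a fair, if slightly more detailed, rendering of the same argument.
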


As the final results of this section we will consider the logics $\mathrm{GRA}(E,\neg,C,\cap,\exists)$ and $\mathrm{GRA}(s,E,\neg,C,\cap,\exists)$. An easy modification in the argument of theorem \ref{linearlyorderedpolynomialmodel} yields a bounded model property for the first logic.

\begin{theorem}
    Let $\mathcal{T} \in \mathrm{GRA}(E,\neg,C,\cap,\exists)$ and suppose that $\mathcal{T}$ is satisfiable. Then $\mathcal{T}$ has a model of size bounded exponentially in $|\mathcal{T}|$.
\end{theorem}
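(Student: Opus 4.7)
The plan is to mirror the construction in the proof of Theorem~\ref{linearlyorderedpolynomialmodel}, making one essential adjustment to accommodate the operator $C$: the domain of the constructed model must contain copies indexed by every $1$-type realized in the original model $\mathfrak{A}$, since the one-dimensional intersection $C$ lets $k$-ary terms refer to the $1$-type of the last coordinate of a tuple. Consequently, similarity of $k$-tuples with respect to $\{E,C,\neg,\cap\}$ requires matching this $1$-type, not merely the table data sufficient in the pure ordered-logic case; this forces the domain to have exponential rather than polynomial size.

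Concretely, I would put $\mathcal{T}$ in normal form and fix a model $\mathfrak{A}\models\mathcal{T}$. Since $I$ is definable in $\mathrm{GRA}(E,\neg,\cap,\exists)$ via $I\beta\equiv\exists E\beta$, the preamble witness-distinctness trick used in Theorem~\ref{linearlyorderedpolynomialmodel} remains available, so one may assume every existential witness $c$ for a tuple $\overline{a}$ satisfies $c\neq a_k$. Let $T_1=\{tp_\mathfrak{A}(a)\mid a\in A\}$; then $|T_1|\leq 2^{O(|\mathcal{T}|)}$. Take as domain $B=T_1\times I\times\{0,1\}$ and assign the $1$-type of each element by $tp_\mathfrak{B}((\pi,i,j)):=\pi$. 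Then $|B|\leq 2^{O(|\mathcal{T}|)}$, and since every $\pi\in T_1$ is realized in $B$, the initial requirements $\exists\kappa_i$ and $\forall\lambda_j$ are handled at the $1$-type level.

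Next, define $(k+1)$-ary tables for tuples of $B$ inductively, maintaining the invariant that every tuple with a defined table is similar (with respect to $\{E,C,\neg,\cap\}$) to some tuple of $\mathfrak{A}$. For an existential requirement $\mathcal{T}_i^\exists$ applied to $\overline{b}\in\llbracket\alpha_i^\exists\rrbracket_\mathfrak{B}$ with $b_k=(\pi,i',j)$, pick $\overline{a}\in A^k$ similar to $\overline{b}$ and a witness $c\in A$ for $\overline{a}$ with $c\neq a_k$, use $d:=(tp_\mathfrak{A}(c),i,j+1\bmod 2)$ as the witness in $B$, and set $tp_\mathfrak{B}(\overline{b}d):=tp_\mathfrak{A}(\overline{a}c)$. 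Fill the remaining $(k+1)$-tables by selecting, for each unassigned extension $(\overline{b},d)$, a tuple $(\overline{a},c)\in A^{k+1}$ with $\overline{a}$ similar to $\overline{b}$ and $tp_\mathfrak{A}(c)=tp_\mathfrak{B}(d)$, and copying. Reserving a distinct second-coordinate slot per existential requirement together with the cyclic shift mod~$2$ in the third coordinate prevents conflicting table assignments, exactly as in Theorem~\ref{linearlyorderedpolynomialmodel}.

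The main obstacle, and the essential departure from the ordered-logic case, is verifying that table copying genuinely preserves $C$-similarity, since a $k$-ary term built with $C$ can constrain the $1$-type of the last coordinate. This verification rests on the fact that the first coordinate of $d$ was chosen to be exactly $tp_\mathfrak{A}(c)$, so $d$ and $c$ realize the same $1$-type in their respective models; combined with similarity of $\overline{b}$ to $\overline{a}$, this yields similarity of $\overline{b}d$ to $\overline{a}c$ with respect to every term of $\mathrm{GRA}(E,C,\neg,\cap)$. Universal requirements follow automatically from the similarity invariant, exactly as in Theorem~\ref{linearlyorderedpolynomialmodel}, and since $|B|\leq 2^{O(|\mathcal{T}|)}$ the exponential model bound follows.
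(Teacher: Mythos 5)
There is a genuine gap: your domain $B = T_1 \times I \times \{0,1\}$ duplicates \emph{every} realized $1$-type, including royal ones, but in $\mathrm{GRA}(E,\neg,C,\cap,\exists)$ the interplay of $E$ and $C$ makes uniqueness of a $1$-type expressible. For instance, if $R$ is any binary symbol and $\mathcal{Z} := \neg(R \cap \neg R)$ (a binary tautology), then
\[\exists P \;\cap\; \forall\bigl(\neg P \cup \forall(E\mathcal{Z} \cup \neg C(\mathcal{Z},P))\bigr)\]
is satisfiable and forces $P$ to be realized by exactly one element. If $\pi_0$ is the (royal) $1$-type of that element in $\mathfrak{A}$, your construction puts $2|I|$ distinct elements of type $\pi_0$ into $\mathfrak{B}$, and the corresponding universal requirement is then violated no matter which tables you copy: for distinct $b,b'$ of type $\pi_0$, membership of $(b,b')$ in $\neg E\mathcal{Z} \cap C(\mathcal{Z},P)$ depends only on $b \neq b'$ and on the $1$-type of $b'$, not on the assigned $2$-table. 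This also shows where your claim that ``universal requirements follow automatically from the similarity invariant'' breaks down: to transfer a requirement $\forall(\neg\alpha \cup \forall\beta)$ you need a single pair $(\overline{a},c)$ in $\mathfrak{A}$ with $\overline{a}$ similar to the prefix \emph{and} $\overline{a}c$ similar to the extension; for two distinct copies of a royal type the only candidate for both the prefix and the last coordinate is the unique king, so no such pair exists, and likewise your filling step cannot find $c \neq a_k$ with $tp_\mathfrak{A}(c)$ equal to a royal type already carried by $a_k$.

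The paper's proof avoids exactly this by treating kings separately: the domain is taken to be $K \cup \{tp_\mathfrak{A}(a) \mid a \text{ is not a king}\} \times I \times \{0,1\}$, so each royal $1$-type has a single representative, and witnesses or extensions whose last element is royal are routed through that unique representative rather than through fresh copies. With that modification (and no ``court'' being needed, since the table of $(a_1,\dots,a_k)$ says nothing about that of $(a_1,\dots,a_k,a_{k-1})$), the rest of your construction --- tracking the $1$-type of the last coordinate for $C$, the equality bit for $E$, reserving witness slots per existential requirement, and cycling the last index --- does go through and matches the paper's argument.
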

\begin{proof}
    Let $\mathcal{T}$ be a term in normal form and assume that $\mathfrak{A}$ is a model of $\mathcal{T}$. As the domain of the bounded model $\mathfrak{B}$ one can now take the set
    \[B := K \cup \{tp_\mathfrak{A}(a) \mid \text{$a$ is not a king}\} \times I \times \{0,1\},\]
    where $K = \{tp_\mathfrak{A}(a) \mid \text{$a$ is a king}\}$. One can now adapt the construction in the proof of theorem \ref{linearlyorderedpolynomialmodel} to obtain a model $\mathfrak{B}$ of $\mathcal{T}$ with domain $B$. Notice that there is no need for a "court", since the table of a tuple $(a_1,...,a_{k-1},a_k)$ does not imply anything about the table of the tuple $(a_1,...,a_k,a_{k-1})$.
\end{proof}

\begin{corollary}
    The satisfiability problem for $\mathrm{GRA}(E,\neg,C,\cap,\exists)$ is \textsc{NexpTime}-complete.
\end{corollary}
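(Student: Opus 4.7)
The plan is to combine the preceding theorem, which establishes an exponential bounded model property for $\mathrm{GRA}(E,\neg,C,\cap,\exists)$, with a routine model-checking argument to obtain \textsc{NexpTime} membership, and to invoke Theorem \ref{chardness} for \textsc{NexpTime}-hardness. For the upper bound, given an input term $\mathcal{T}$, the nondeterministic algorithm first applies Lemma 4.3 to guess an equisatisfiable normal-form term $\mathcal{T}'$ of polynomial size (over an extended vocabulary); by the preceding theorem, if $\mathcal{T}'$ is satisfiable then it admits a model $\mathfrak{B}$ of size at most $2^{O(|\mathcal{T}'|)}$. The algorithm guesses such a $\mathfrak{B}$---writing down its domain together with the interpretations of the finitely many relational symbols occurring in $\mathcal{T}'$---and then deterministically verifies that $\mathfrak{B} \models \mathcal{T}'$.

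The only point requiring some care is ensuring that the verification step runs in exponential time. By induction on the structure of subterms, the interpretation $\llbracket \mathcal{P} \rrbracket_\mathfrak{B}$ of any subterm $\mathcal{P}$ of $\mathcal{T}'$ is an AD-relation whose arity is at most the maximum arity $k$ of a relational symbol appearing in $\mathcal{T}'$, because none of the operators in $\{E,\neg,C,\cap,\exists\}$ strictly increases arity beyond the arities of its arguments (the cases of $\neg$, $\cap$, and $E$ preserve arity; $\exists$ decreases it; and the definition of $C$ produces output of arity $\max\{k,\ell\}$). Consequently $\llbracket \mathcal{P} \rrbracket_\mathfrak{B}$ contains at most $|B|^k \leq 2^{O(|\mathcal{T}'|^2)}$ tuples, and all these interpretations can be computed bottom-up in time $2^{O(|\mathcal{T}'|^2)}$, which is still within \textsc{NexpTime}.

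For the matching lower bound, Theorem \ref{chardness} already establishes \textsc{NexpTime}-hardness for $\mathrm{GRA}(\neg,C,\cap,\exists)$, which is a sub-algebra of $\mathrm{GRA}(E,\neg,C,\cap,\exists)$; hence the hardness transfers without additional work. The main conceptual obstacle---producing the exponential-size witness model---has already been overcome in the preceding theorem, so the corollary itself requires only the routine verification above.
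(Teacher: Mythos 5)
Your proposal is correct and follows essentially the route the paper intends: the corollary is meant to be immediate from the exponential bounded model property established in the preceding theorem (guess a normal-form term via the normal form lemma, guess an exponential-size model, verify it, noting that the operators do not increase arity beyond that of the relational symbols) together with Theorem \ref{chardness} for hardness, which transfers since $\mathrm{GRA}(\neg,C,\cap,\exists)$ is a sub-algebra. Your explicit accounting of the model-checking cost is a reasonable elaboration of what the paper leaves implicit.
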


The logic $\mathrm{GRA}(s,E,\neg,C,\cap,\exists)$ turns out to be more tricky. While we have not been able to verify whether this logic is undecidable, we can show that it does not have the finite model property. For instance, it is straightforward to verify that the following term $\mathcal{T}$ is an infinity-axiom.
\[\exists Z \cap \forall \exists C(S,\neg Z) \cap \forall \forall (\neg s S \cup \forall s (G \cup F)) \cap \forall \forall (\neg \exists F \cup \neg s S),\]
where $Z$ is a unary relation, $S$ is a binary relation and $F$ is ternary relation, and $G$ is a shorthand for $E(F\cup \neg F)$. For the readers convenience, we have written the above term also using a more standard syntax
\[\exists v_1 Z(v_1) \land \forall v_1 \exists v_2 (S(v_1,v_2) \land \neg Z(v_2))\]
\[\land \forall v_1 \forall v_3 (\neg S(v_3,v_1) \lor \forall v_2 (v_2 = v_3 \lor F(v_1,v_2,v_3)))\]
\[\land \forall v_1 \forall v_2 (\neg \exists v_3 F(v_1,v_2,v_3) \lor \neg S(v_2,v_1)),\]
where we have replaced $(G\cup F)$ with the equivalent formula $v_2 = v_3 \lor F(v_1,v_2,v_3)$.

\begin{proposition}
    Suppose that $\mathfrak{A} \models \mathcal{T}$. Then $|A|\geq \omega$.
\end{proposition}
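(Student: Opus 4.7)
The plan is to read off the four conjuncts of $\mathcal{T}$ in their first-order form and then exhibit an infinite sequence of pairwise distinct elements by chasing $S$-successors starting from a point in $Z$. The first conjunct supplies an element $v_0 \in Z^{\mathfrak{A}}$. The second conjunct $\forall v_1 \exists v_2 (S(v_1,v_2) \land \neg Z(v_2))$ lets us inductively pick, for each $n \geq 0$, an element $v_{n+1}$ with $S(v_n, v_{n+1})$ and $v_{n+1} \notin Z^{\mathfrak{A}}$. So I will build an $\omega$-long sequence $v_0, v_1, v_2, \dots$ such that $v_0$ is in $Z$, every later $v_i$ is outside $Z$, and consecutive pairs lie in $S$.

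The key lemma I will establish is that every element has \emph{at most one} $S$-predecessor. Suppose $a$ has two distinct $S$-predecessors $b \neq c$. Applying the third conjunct to $(v_1, v_2, v_3) = (a, b, c)$: since $S(c,a)$ holds, either $b = c$ (excluded) or $F(a,b,c)$. Hence $F(a,b,c)$. But now the fourth conjunct applied to $(v_1, v_2) = (a, b)$ gives that $\exists v_3 F(a,b,v_3)$ implies $\neg S(b,a)$, contradicting $S(b,a)$. Thus $S$-predecessors are unique.

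Finally, I will show the $v_i$ are pairwise distinct. Suppose for contradiction that $v_i = v_j$ with $i < j$. If $i \geq 1$, then $v_{i-1}$ and $v_{j-1}$ are both $S$-predecessors of $v_i = v_j$, so by the lemma $v_{i-1} = v_{j-1}$. Iterating this $i$ times yields $v_0 = v_{j-i}$; but $j - i \geq 1$, so $v_{j-i} \notin Z^{\mathfrak{A}}$, contradicting $v_0 \in Z^{\mathfrak{A}}$. If instead $i = 0$, then $v_0 = v_j$ is already a direct contradiction with $v_0 \in Z^{\mathfrak{A}}$ and $v_j \notin Z^{\mathfrak{A}}$. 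Therefore $\{v_n : n \in \omega\}$ is an infinite subset of $A$, giving $|A| \geq \omega$.

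The main obstacle is simply making sure the third and fourth conjuncts are parsed with the correct variable order: the crucial point is that the third conjunct enforces $F(a,b,c)$ precisely when $S(c,a)$ holds and $b \neq c$, while the fourth forbids any such $F(a,b,\cdot)$ as soon as $S(b,a)$ holds, and it is the clash of these two "roles" ($b$ as a second predecessor versus $b$ as a filler) that yields uniqueness of $S$-predecessors. Once this is in hand, the backward-induction argument along the $S$-chain is routine.
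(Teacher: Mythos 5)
Your proof is correct and is essentially the paper's argument: the paper picks a Skolem function $f:A\to (A\setminus Z^{\mathfrak{A}})$ for the second conjunct and proves it injective by exactly the contradiction you derive (two distinct $S$-predecessors of one element force, via the third conjunct, an $F$-triple that the fourth conjunct forbids). The only difference is the final bookkeeping — you unfold an explicit $\omega$-chain from an element of $Z^{\mathfrak{A}}$ and prove pairwise distinctness by backward induction, while the paper concludes directly from having an injection of $A$ into the proper subset $A\setminus Z^{\mathfrak{A}}$ — which is immaterial.
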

\begin{proof}
    Let $\mathfrak{A}$ be a model which satisfies $\mathcal{T}$. Thus there exists a function $f:A \rightarrow (A\backslash Z^\mathfrak{A})$ with the property that $(a,f(a)) \in S^\mathfrak{A}$, for every $a\in A$. Since $Z^\mathfrak{A} \neq \varnothing$, it suffices to show that $f$ must be an injection. So, suppose that $a\neq b$ and $f(a) = f(b)$. Now $(a,f(a)) \in S^\mathfrak{A}$, and hence $(f(a),b,a) \in F^\mathfrak{A}$, since $a\neq b$. Thus $(b,f(a)) \not\in S^\mathfrak{A}$, which is a contradiction since $f(a) = f(b)$.
\end{proof}

In $\mathcal{T}$ we used $C$ only in the term $\forall \exists C(S,\neg Z)$. This could be replaced by the term
\[\forall \exists (S \cap \neg Z') \cap \forall (\neg \exists s Z' \cup Z),\]
where $Z'$ is a fresh binary relational symbol. Thus also $\mathrm{GRA}(s,E,\neg,\cap,\exists)$ lacks finite model property.

\section{One-dimensional ordered logics}

In this section we consider logics that are obtained from the ordered logic and the fluted logic by imposing the restriction of one-dimensionality. We will start by considering the one-dimensional fluted logic extended which has been extended with the operators $s$ and $E$. As the first result of this section we will show that the satisfiability problem for this logic is \textsc{NexpTime}-complete. As usual, we will prove this by showing that the logic has the bounded model property. The proof was heavily influenced by a similar model constructions performed in \cite{U1b} and \cite{Kieronski2019OnedimensionalGF}, which were originally influenced by the classical model construction used in \cite{GKV97}.

\begin{theorem}\label{BoundedModelFL1}
    Let $\mathcal{T} \in \mathrm{GRA}(s,E,\neg,\Dot{\cap},\exists_1,\exists_0)$ and suppose that $\mathcal{T}$ is satisfiable. Then $\mathcal{T}$ has a model of size bounded exponentially in $|\mathcal{T}|$.
\end{theorem}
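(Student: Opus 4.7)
Plan: Following the template of Theorems~\ref{linearlyorderedpolynomialmodel} and \ref{swaprefuniformboundedmodel}, and in the spirit of the one-dimensional model constructions in \cite{GKV97, U1b, Kieronski2019OnedimensionalGF}, I would begin by applying the normal form lemma to rewrite $\mathcal{T}$ as
\[\bigcap_{i \in I'} \exists_0 \kappa_i \cap \bigcap_{j \in J'} \forall_0 \lambda_j \cap \bigcap_{i \in I} \forall_0(\neg \alpha_i^\exists \cup \exists_1 \beta_i^\exists) \cap \bigcap_{j \in J} \forall_0(\neg \alpha_j^\forall \cup \forall_1 \beta_j^\forall),\]
fix a model $\mathfrak{A} \models \mathcal{T}$, and let $\Sigma$ denote the set of $1$-types realized in $\mathfrak{A}$, so that $|\Sigma| \leq 2^{O(|\mathcal{T}|)}$. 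For each pair $(\pi, i)$ with $\pi \in \Sigma$ and $\pi \models \alpha_i^\exists$, I would fix once and for all an element $a_\pi \in A$ of type $\pi$ together with a witness tuple $\overline{c}_{i,\pi} \in A^{ar(\beta_i^\exists) - 1}$ realizing $a_\pi \overline{c}_{i,\pi} \in \llbracket \beta_i^\exists \rrbracket_\mathfrak{A}$.

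The domain of the bounded model $\mathfrak{B}$ is taken to be $B = \Sigma \times I \times \{0, 1, 2\}$, which has size at most $2^{O(|\mathcal{T}|)}$. Each $(\pi, i, j) \in B$ is declared to have $1$-type $\pi$, which already secures the $\exists_0 \kappa_i$ and $\forall_0 \lambda_j$ conjuncts of $\mathcal{T}$. The tables of longer tuples are then specified by induction on the tuple length, maintaining the invariant from the two earlier proofs: every $\overline{b} \in B^k$ whose table has been defined is similar, with respect to $\{s, E, \neg, \cap, \Dot{\cap}\}$, to some $\overline{a} \in A^k$. In the witness phase at each round, for each host element $b \in B$ of $1$-type $\pi$ and each existential requirement $\mathcal{T}_i^\exists$ with $\pi \models \alpha_i^\exists$ still unsatisfied for $b$, I would attach a block of witness elements of $B$ whose entries match the $1$-types of $\overline{c}_{i,\pi}$, shifting the $I$- and $\{0,1,2\}$-coordinates cyclically so that the block is disjoint from the host, from the swap partner of the host tuple, and from witness blocks previously attached for other pairs $(b', i')$; the table of the full tuple $b\overline{d}$ is then set equal to $tp_\mathfrak{A}(a_\pi \overline{c}_{i,\pi})$. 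In the filler phase, every remaining $k$-tuple receives the table of any $\mathfrak{A}$-tuple of matching $1$-type pattern.

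The main obstacle is maintaining the invariant in the simultaneous presence of $s$, $E$, and $\Dot{\cap}$. The operator $\Dot{\cap}$ couples the table of $\overline{b}$ with those of all its suffixes, $s$ couples it with the table of its swapped sibling $(b_1, \dots, b_{k-2}, b_k, b_{k-1})$, and $E$ forces a special value whenever $b_{k-1} = b_k$. As the remark after the definition of a table observes, the only non-trivial overlap between tables for $\{s, E\}$ is precisely this swap overlap, so consistency can be secured provided the witness assignment does not accidentally identify two distinct witness blocks with a swap pair or create a suffix equality that is not present in $\mathfrak{A}$. The triple-index counter $\{0, 1, 2\}$, exactly as in the proof of Theorem~\ref{swaprefuniformboundedmodel}, together with the per-requirement coordinate $I$, ensures that no such collision arises. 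Once the invariant has been carried through the induction, every universal requirement of $\mathcal{T}$ transfers from $\mathfrak{A}$ to $\mathfrak{B}$ by similarity, every existential and $0$-ary requirement is satisfied by construction, and hence $\mathfrak{B} \models \mathcal{T}$ with $|B| = 2^{O(|\mathcal{T}|)}$.
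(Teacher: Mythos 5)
There is a genuine gap: your construction ignores kings, and in the presence of $E$ (together with $s$ and $\Dot{\cap}$) the logic can force a $1$-type to be realized by exactly one element. For instance, with a binary tautology $\mathcal{B}$ at hand, the $0$-ary term $\neg \exists_0\bigl(s(P \,\Dot{\cap}\, \mathcal{B}) \,\Dot{\cap}\, P \,\Dot{\cap}\, \neg E\mathcal{B}\bigr)$ asserts that no two \emph{distinct} elements satisfy $P$. Your domain $B = \Sigma \times I \times \{0,1,2\}$ realizes every $1$-type of $\mathfrak{A}$ at least $3|I|$ times, so any such ``uniqueness'' requirement satisfied by $\mathfrak{A}$ is immediately violated in $\mathfrak{B}$, before the table-assignment even starts. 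The failure also resurfaces concretely in your filler phase: for two distinct elements $b \neq b'$ of $\mathfrak{B}$ carrying the same royal $1$-type there is no pair of distinct elements of $\mathfrak{A}$ with matching types to copy a table from, so the similarity invariant (which, because of $E$, must respect equality patterns and not just $1$-types) cannot be maintained. This is precisely why the paper's proof first isolates the set $K$ of kings, builds a court $C$ consisting of the kings together with chosen witness tuples for them, makes $\mathfrak{B} \upharpoonright C$ isomorphic to $\mathfrak{A} \upharpoonright C$, and then multiplies only the \emph{non-royal} parts $W_{\pi,i}$ of witness tuples for non-royal $1$-types (in three disjoint copies $W_{\pi,i,j}$), handling witnesses for court elements and for the copied layers separately.

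Apart from this, the skeleton of your argument (normal form, one fixed element $a_\pi$ and witness tuple per pair $(\pi,i)$, cyclic counter $\{0,1,2\}$ to avoid collisions between a witness block and the host or its swap partner, filler phase by similarity) does match the paper's strategy for the non-royal part, and your discussion of why the swap/suffix overlaps are the only sources of table conflicts is in the right spirit. But without the kings-and-court mechanism the stated construction is not merely incomplete in detail; it produces a structure that can fail to satisfy $\mathcal{T}$, so the proof as proposed does not go through.
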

\begin{proof}
    Let $\mathcal{T} \in \mathrm{GRA}(s,E,\neg,\Dot{\cap},\exists_1,\exists_0)$ be a term in normal form and suppose that $\mathcal{T}$ is satisfiable. Let us fix an arbitrary model $\mathfrak{A}$ of $\mathcal{T}$, which we will use to construct a bounded model $\mathfrak{B}$ for $\mathcal{T}$. Let $K \subseteq A$ denote the set of kings of $A$. For each existential requirement $\mathcal{T}_i^\exists$ of $\mathcal{T}$ and $k\in \llbracket \alpha_i^\exists \rrbracket_\mathfrak{A} \cap K$, we will pick some witness $\overline{c}$. Let $C$ denote the resulting set. Next, we let $P$ denote the set of non-royal $1$-types realized by elements of $\mathfrak{A}$. Fix some function $f:P \to A$ with the property that $tp_\mathfrak{A}(f(\pi)) = \pi$, for every $\pi \in P$. For every existential requirement $\mathcal{T}_i^\exists$ and $\pi \in P$ so that $\pi \models \alpha_i^\exists$, we will pick some witness $\overline{c}^{\pi,i}$. Let $W_{\pi,i}$ denote the set of elements occurring in $\overline{c}^{\pi,i}$ that are not kings. 
    
    As the domain of the bounded model $\mathfrak{B}$ we will then take the following set
    \[B = C \cup \bigcup_{\pi,i,j} W_{\pi,i,j},\]
    where $\pi$ ranges over $P$, $i$ ranges over $I$ and $j$ ranges over $\{0,1,2\}$. The sets $W_{\pi,i,j}$ are pairwise disjoint copies of the sets $W_{\pi,i}$. Clearly $|B| \leq 2^{O(|\mathcal{T}|)}$. We will make $\mathfrak{B} \upharpoonright C$ isomorphic with $\mathfrak{A} \upharpoonright C$. Furthermore, we will make each of the structures $\mathfrak{B} \upharpoonright (K \cup W_{\pi,i,j})$ isomorphic with the corresponding structures $\mathfrak{A} \upharpoonright (K \cup W_{\pi,i})$.
    
    We will then provide witnesses for elements of $B$. Since we have already provided witnesses for kings, we need to only provide witnesses for non-royal elements of the court and for elements in $(B\backslash C)$. We will start with the non-royal elements of the court. Consider an existential requirement $\mathcal{T}_i^\exists$ and let $b\in (C\backslash K) \cap \llbracket \alpha_i^\exists \rrbracket_\mathfrak{A}$. If there exists a witness for $b$ and $\mathcal{T}_i^\exists$ in $C$, then nothing needs to be done. So suppose that there does not exists a witness for $b$ and $\mathcal{T}_i^\exists$ in $C$. If $\pi$ is the $1$-type of $b$ in $\mathfrak{B}$, then we know that there exists a witness $\overline{c}$ for $f(\pi)$ and $\mathcal{T}_i^\exists$. We have now two cases.
    
    Suppose first that the length of $\overline{c}$ is one, i.e. $\overline{c} = c$, for some $c\in A$. If $c = a$, then $b$ is already a witness for itself in $\mathfrak{B}$. If $c\neq a$, then we define $tp_\mathfrak{B}(b,d) = tp_\mathfrak{B}(a,c)$ and $tp_\mathfrak{B}(d,b) = tp_\mathfrak{B}(c,a)$, where $d$ denotes the single element of $W_{\pi,i,0}$ (note that $d$ can't be a king, since otherwise $b$ and $\mathcal{T}_i^\exists$ would have had a witness in $C$).
    
    Suppose then that the length of $\overline{c}$ is $k > 1$. If $\overline{d} \in (W_{\pi,i,0} \cup K)^k$ denotes the corresponding witness, then we define $tp_\mathfrak{B}(b\overline{d}) = tp_\mathfrak{A}(a\overline{c})$ and $tp_\mathfrak{B}(bd_1,...,d_k,d_{k-1}) = tp_\mathfrak{B}(ac_1,...,c_k,c_{k-1})$. Note that since $b$ does not occur in $\overline{d}$ and $\overline{d}$ contains at least one non-royal element, the above definitions do not lead into any conflicts with the structure that we have assigned for $\mathfrak{B} \upharpoonright C$.
    
    Thus we have managed to provide witnesses for elements in $C\backslash K$. To provide witnesses for elements of $(B\backslash C)$, we can do roughly the same as above with the exception that instead of $W_{\pi,i,0}$, we will use - assuming that $b\in W_{\pi',i',j}$ - the set $W_{\pi,i,j+1\mod 3}$. Let us then briefly argue that the above procedure for producing witnesses can be executed without conflicts. First we note that we do not face any conflicts when assigning witnesses for some $b$ and $\mathcal{T}_i^\exists$ and then for $b$ and $\mathcal{T}_i^\exists$, where $i\neq i'$, since for every $j$ the sets $W_{\pi,i,j}$ and $W_{\pi,i',j}$ are disjoint. Secondly we note that we do not face any conflicts when assigning witnesses for some $b$ and $\mathcal{T}_i^\exists$ and then for $b\neq b'$ and $\mathcal{T}_i^\exists$, since in the first case we assign a table for the tuple $b\overline{d}$ and in the second case for $b'\overline{d}$, and neither of these tables imply anything about the other table. Finally we note that since we are assigning witnesses in a cyclic manner, if we use $\overline{d}$ as a witness for $b \not\in C$ and $\mathcal{T}_i^\exists$, then we we are never using any tuple containing $b$ as a witnesses for any of the elements in $\overline{d}$.
    
    To complete the structure, for every $k$ we need to define the tables for tuples $\overline{b} \in B^k$. We can do this inductively with respect to $k$ as follows. Suppose first that there exists distinct elements $b, b' \in B$ so that we have not assigned table for the pair $(b,b')$. Now we choose a pair of distinct elements $a,a' \in A$ with the same $1$-types as $b$ and $b'$, and then define $tp_\mathfrak{B}(b,b') = tp_\mathfrak{A}(a,a')$ and $tp_\mathfrak{B}(b',b) = tp_\mathfrak{A}(a',a)$. Note that such elements $a,a'$ exists even if the elements $b,b'$ would have the same $1$-types, since at least one of them is not a king. Suppose then that we have defined the tables for every $\overline{d} \in B^k$. Let $b\in B$ and $\overline{d} \in B^k$ be so that we have not defined the table for the tuple $(b,\overline{d})$. By construction there exists $\overline{c} \in A^k$ which is similar with $\overline{d}$. Let $a \in A$ be an arbitrary element which has the same $1$-type as $b$. We then define $tp_\mathfrak{B}(b,\overline{d}) = tp_\mathfrak{A}(a,\overline{c})$ and $tp_\mathfrak{B}(b,d_1,...,d_k,d_{k-1}) = tp_\mathfrak{A}(a,c_1,...,c_k,c_{k-1})$. Continuing this way it is clear that we can define tables for all the tuples of $B^k$ in such a way that we do not violate any of the universal requirements.
\end{proof}

\begin{corollary}
    The satisfiability problem for $\mathrm{GRA}(s,E,\neg,\Dot{\cap},\exists_1,\exists_0)$ is \textsc{NexpTime}-complete.
\end{corollary}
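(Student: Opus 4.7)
The plan is to combine the exponentially bounded model property established in Theorem~\ref{BoundedModelFL1} with \textsc{NexpTime}-hardness transferred from Theorem~\ref{swaphardness}.

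For the upper bound I would run the following nondeterministic algorithm. Given an input term $\mathcal{T}$, first apply the nondeterministic polynomial-time normalisation from the normal form lemma to obtain an equi-satisfiable term in normal form; then guess a structure $\mathfrak{B}$ with domain of size at most $2^{O(|\mathcal{T}|)}$ together with an interpretation of every relation symbol occurring in $\mathcal{T}$; finally verify that $\mathfrak{B} \models \mathcal{T}$ by a bottom-up evaluation of $\llbracket \mathcal{T} \rrbracket_\mathfrak{B}$. By Theorem~\ref{BoundedModelFL1}, satisfiability of $\mathcal{T}$ is equivalent to the existence of such a $\mathfrak{B}$. For the complexity of the verification step I would observe that every operator in $\{s,E,\neg,\Dot{\cap},\exists_1,\exists_0\}$ either preserves arity or reduces it (with $\Dot{\cap}$ taking the maximum of its two input arities), so the arity of every subterm is bounded by the maximal arity $k$ of a relation symbol appearing in $\mathcal{T}$, and $k \leq |\mathcal{T}|$. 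Consequently each subterm interpretation has at most $|B|^k \leq 2^{O(|\mathcal{T}|^2)}$ tuples, and the full bottom-up evaluation runs in time $2^{\mathrm{poly}(|\mathcal{T}|)}$.

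For the lower bound I would observe that the reduction constructed in the proof of Theorem~\ref{swaphardness} uses only a vocabulary of arity at most two. On such vocabularies the operators $\cap$ and $\Dot{\cap}$ coincide whenever invoked (they are always applied to terms of equal arity in that reduction), an $\exists$ applied to a binary term agrees with $\exists_1$, and an $\exists$ applied to a unary term agrees with $\exists_0$. Rewriting each occurrence accordingly turns the reduction's output into a term of $\mathrm{GRA}(s,\neg,\Dot{\cap},\exists_1,\exists_0)$, which is a sub-algebra of $\mathrm{GRA}(s,E,\neg,\Dot{\cap},\exists_1,\exists_0)$; this transfers the \textsc{NexpTime}-hardness.

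The main technical obstacle has already been absorbed into Theorem~\ref{BoundedModelFL1}, where the cyclic use of the three copies $W_{\pi,i,j}$, $j \in \{0,1,2\}$, is what avoids witness conflicts in the presence of both $s$ and the suffix intersection. With that theorem in hand, the corollary itself is essentially a routine packaging step and I do not expect any serious difficulties beyond those already handled above.
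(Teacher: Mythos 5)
Your proposal is correct and follows essentially the same route the paper intends: the upper bound is the standard guess-and-check packaging of the exponential model property of Theorem~\ref{BoundedModelFL1} (after the nondeterministic normal-form step), and the lower bound is exactly the paper's own observation that the reduction of Theorem~\ref{swaphardness} uses only an at-most-binary vocabulary, where $\exists$ coincides with $\exists_1/\exists_0$ and $\cap$ with $\Dot{\cap}$ on equal arities. Your explicit verification that the bottom-up evaluation stays within exponential time because no operator increases arity is a fine (and correct) way to discharge the model-checking step the paper leaves implicit.
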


Next we will consider the one-dimensional ordered logic extended with the equality operator $E$, which is the logic $\mathrm{GRA}(E,\neg,\cap,\exists_1,\exists_0)$. Perhaps surprisingly, the complexity of this logic drops down from \textsc{Pspace} to \textsc{NP}.

Let $\mathcal{T} \in \mathrm{GRA}(E,\neg,\cap,\exists_1,\exists_0)$ be a term in normal form. Without loss of generality we can assume that for every universal requirement
\[\forall_0 (\neg \alpha_j^\forall \cup \forall_1 \beta_j^\forall)\]
there exists a corresponding existential requirement
\[\forall_0 (\neg \alpha_j^\forall \cup \exists_1 \beta_j^\forall).\]
Now let us describe a non-deterministic algorithm for checking whether $\mathcal{T}$ is satisfiable. The algorithm will start by initializing a set $A$ of size $2|I'||I|$ and guessing $1$-types for each element of $A$. If these $1$-types are already conflicting with $\mathcal{T}$, the algorithm rejects. Otherwise the algorithm guesses, for every $a\in A$ and existential requirement $\mathcal{T}_i^\exists$, for which the $1$-type of $a$ implies $\alpha_i^\exists$, a tuple of elements $\overline{c}$ from $A$ and a table $\rho_i$ and then defines the table of $a\overline{c}$ to be $\rho_i$, after which the algorithm checks that $\rho_i \models \beta_i^\exists$, and that there is no universal requirement $\mathcal{T}_j^\forall$ so that the $1$-type of $a$ implies $\alpha_j^\forall$ but $\rho_i \not\models \beta_j^\forall$. If the algorithm manages to assign witnesses for all of the elements in this manner, then the algorithm accepts, and otherwise it rejects.

The above algorithm clearly runs in polynomial time, since it just guesses witnesses for polynomially many elements. Perhaps surprisingly, the algorithm is also correct, even though it never constructs the whole model. This is explained by the fact that the tables that we are assigning for witnesses can be also used to perform the completion of the underlying model, which is guaranteed by the fact that for each universal requirement there exists a corresponding existential requirement.

\begin{lemma}
    The above algorithm accepts iff the input term $\mathcal{T}$ is satisfiable.
\end{lemma}
\begin{proof}
    Suppose first that $\mathcal{T}$ is satisfiable. An easy adaptation of the proof of theorem \ref{linearlyorderedpolynomialmodel} shows that $\mathcal{T}$ has a model $\mathfrak{A}$ of size $2|I'||I|$. Then it follows immediately that the above algorithm accepts, since it has at least one accepting run, namely the run where it makes it guesses according to the model $\mathfrak{A}$.
    
    Suppose then that there exists an accepting run for the above algorithm and let $\mathfrak{A}$ be the resulting structure. Now it might be the case that for some universal requirement $\mathcal{T}_j^\forall$ there exists $a\in \llbracket \alpha_j^\forall \rrbracket_\mathfrak{A}$ and $\overline{c} \in A^k$, where $k = ar(\beta_j^\forall)$, so that $tp_\mathfrak{A}(a\overline{c})$ has not been defined. By our assumption there exists a corresponding existential requirement in $\mathcal{T}$. Hence there must exists some tuple $\overline{c}'$ so that the table $\rho$ that the algorithm assigned for $a\overline{c}$ satisfies $\rho \models \beta_j^\forall$. Thus we can define $tp_\mathfrak{A}(a\overline{c}) = \rho$. Continuing this way it is possible to complete the structure $\mathfrak{A}$ into a proper model of $\mathcal{T}$.
\end{proof}

\begin{corollary}
    The satisfiability problem for $\mathrm{GRA}(E,\neg,\cap,\exists_1,\exists_0)$ is \textsc{NP}-complete.
\end{corollary}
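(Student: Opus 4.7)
The plan is to obtain the corollary by combining two observations. First, for the \textsc{NP} upper bound I would appeal directly to the preceding lemma together with the polynomial-time analysis already sketched before it: the nondeterministic algorithm guesses a domain $A$ of size $2|I'||I|$, a $1$-type for each element of $A$, and, for each element $a$ and each triggered existential requirement $\mathcal{T}_i^\exists$, a witness tuple from $A$ together with its table. All these guesses have total size polynomial in $|\mathcal{T}|$, and the subsequent checks (that each assigned table entails the corresponding $\beta_i^\exists$ and none of them violates any $\beta_j^\forall$ triggered by the $1$-type of $a$) are straightforward polynomial-time evaluations of subterms over a structure of polynomial size. Combined with the correctness lemma, this yields an \textsc{NP} decision procedure.

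For the matching \textsc{NP}-hardness bound, I would give a polynomial reduction from propositional SAT, in the same spirit as the reduction mentioned earlier for $\mathrm{GRA}(\neg,\cap,\exists)$ over unary vocabularies. Given a propositional formula $\varphi(p_1,\dots,p_n)$, introduce one fresh unary relation symbol $P_i$ per propositional variable and define a unary term $t(\varphi)$ inductively by $t(p_i) := P_i$, $t(\neg\psi) := \neg t(\psi)$, and $t(\psi\wedge\chi) := t(\psi) \cap t(\chi)$. The reduction sends $\varphi$ to the $0$-ary term $\exists_0 t(\varphi)$. Over a one-element universe $\{a\}$, choosing $P_i^{\mathfrak{A}} \in \{\varnothing,\{a\}\}$ exactly mirrors a boolean assignment to $p_i$, so $\exists_0 t(\varphi)$ is satisfiable iff $\varphi$ is; the construction is plainly polynomial.

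I expect no significant obstacle: essentially all the work is carried by the previous lemma, and the hardness side is a one-element-model translation of propositional logic into unary terms. The only point deserving a brief check is that the translation stays inside the fragment on both sides: the verifier works for arbitrary normal-form terms of the full algebra $\mathrm{GRA}(E,\neg,\cap,\exists_1,\exists_0)$, while the SAT-reduction uses only the sub-operators $\neg$, $\cap$, and $\exists_0$, so membership in the stated algebra is immediate in both directions.
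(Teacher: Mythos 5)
Your proposal is correct and matches the paper's route: the upper bound is exactly the preceding nondeterministic polynomial-size-guess algorithm together with its correctness lemma, and the lower bound is the same propositional-SAT reduction over unary vocabularies that the paper invokes earlier (with $\exists_0$ playing the role of $\exists$ on unary terms). Your explicit spelling-out of the reduction and the remark that both directions stay inside the fragment are fine and consistent with the paper.
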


Note that the satisfiability problem remains \textsc{NexpTime}-hard for the logics $\mathrm{GRA}(\neg,C,\cap,\exists_1,\exists_0)$ and $\mathrm{GRA}(s,\neg,\cap,\exists_1,\exists_0)$, since in the proofs of theorems \ref{chardness} and \ref{swaphardness} we used only vocabularies with at most binary relational symbols.

\section{Undecidable extensions}

In this section we will prove that several natural ordered fragments of first-order logic are undecidable. We will start by observing that adding the cyclic permutation operator to ordered logic leads to an undecidable logic.

\begin{theorem}
    The satisfiability problem for $\mathrm{GRA}(p,\neg,\cap,\exists)$ is $\Pi_1^0$-complete.
\end{theorem}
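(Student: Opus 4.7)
The upper bound is immediate: $\mathrm{GRA}(p,\neg,\cap,\exists)$ is a fragment of $\mathrm{FO}$, whose satisfiability problem is $\Pi_1^0$. For $\Pi_1^0$-hardness, the plan is to reduce the satisfiability problem of the three-dimensional product modal logic $\mathrm{S5}^3$, which is known to be $\Pi_1^0$-hard. This mirrors the $\mathrm{S5}^2$ reduction of Theorem \ref{swaphardness}, as already foreshadowed by the remark following that theorem: while $s$ on a binary relation gives access to both directions of quantification, the cyclic permutation $p$ on a ternary relation gives access to quantification along each of the three coordinates.

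Given $\varphi \in \mathrm{S5}^3$, introduce a ternary symbol $Q$ for each propositional letter $q$ of $\varphi$, and a ternary symbol $S_{\lozenge_i \psi}$ for every subformula $\lozenge_i \psi$. Define a translation $t : \mathrm{Subf}(\varphi) \to \mathrm{GRA}(p,\neg,\cap,\exists)$ by the usual Boolean clauses together with $t(q) := Q$ and $t(\lozenge_i \psi) := S_{\lozenge_i \psi}$. Set $\pi_i := \exists p^{j_i}$ with $j_3 = 0$, $j_1 = 1$, $j_2 = 2$; a direct unfolding shows that $\llbracket \pi_i \mathcal{T}\rrbracket$ is, up to the induced permutation of its two free arguments, the existential projection of $\llbracket \mathcal{T}\rrbracket$ along its $i$-th coordinate, and since $p$ and $\neg$ commute, $\neg \pi_i \neg$ is the corresponding universal projection. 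The reduction then outputs the $0$-ary term
\[
\mathcal{T}_\varphi := \exists\exists\exists t(\varphi)\ \cap \bigcap_{i,\lozenge_i \psi} \forall\forall \bigl((\neg \pi_i S_{\lozenge_i \psi} \cup \pi_i t(\psi)) \cap (\neg \pi_i t(\psi) \cup \neg \pi_i \neg S_{\lozenge_i \psi})\bigr),
\]
where the outer intersection ranges over $i \in \{1,2,3\}$ and $\lozenge_i \psi \in \mathrm{Subf}(\varphi)$. The binding conjuncts force each $S_{\lozenge_i \psi}$ to be uniform in its $i$-th argument and to coincide there with the existential projection of $t(\psi)$ along that axis, which matches the $\mathrm{S5}$-semantics of $\lozenge_i$.

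Correctness follows by the two standard directions. Given a model $\mathfrak{A}$ of $\mathcal{T}_\varphi$, view $A^3$ as the carrier of an $\mathrm{S5}^3$-frame whose three equivalence relations identify triples that agree on all but one coordinate; a routine induction on $\psi$ shows that $(a_1,a_2,a_3)$ satisfies $\psi$ in this Kripke model iff $(a_1,a_2,a_3) \in \llbracket t(\psi)\rrbracket_\mathfrak{A}$, so that $\exists\exists\exists t(\varphi)$ yields a $\varphi$-world. Conversely, from an $\mathrm{S5}^3$ model of $\varphi$ one builds a model of $\mathcal{T}_\varphi$ by a functional-representation construction analogous to the one in the \textsc{Pspace}-hardness proof for $\mathrm{GRA}(\neg,\cap,\exists)$.

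The main technical point to watch is verifying that the two directional clauses in each conjunct of $\mathcal{T}_\varphi$ together enforce uniformity of $S_{\lozenge_i \psi}$ in its $i$-th coordinate and that the dualization via $\neg \pi_i \neg$ does indeed yield the universal projection one wants. Once this is checked, the translation of the three modalities through the three rotations $\mathrm{id}, p, p^2$ works uniformly and the rest of the argument is a direct rewriting of the proof of Theorem \ref{swaphardness} one dimension up.
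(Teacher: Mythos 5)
Your proposal is correct and follows essentially the same route as the paper, which proves this theorem precisely by adapting the $\mathrm{S5}^2$ reduction of Theorem \ref{swaphardness} to $\mathrm{S5}^3$, using $p$ together with $\exists$ to project along any of the three coordinates; your explicit choice of $\pi_i = \exists p^{j_i}$ and the two-clause binding of each $S_{\lozenge_i\psi}$ is exactly the intended ``one dimension up'' version of that proof. The only small remark is that the converse direction is even easier than your sketch suggests: since $\mathrm{S5}^3$-satisfiability can be taken over products of universal frames, one can interpret the ternary symbols directly on such a product without any functional-representation detour.
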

\begin{proof}
    A straightforward adaption of the proof of theorem \ref{swaphardness} allows one to reduce the satisfiability problem of $\mathrm{S5}^3$ to that of $\mathrm{GRA}(p,\neg,\cap,\exists)$, since using $p$ and $\exists$ we can project elements from a tuple in an arbitrary order. Since the satisfiability problem is $\Pi_1^0$-hard for $\mathrm{S5}^3$ and $\mathrm{GRA}(p,\neg,\cap,\exists)$ is a fragment of $\mathrm{FO}$, the claim follows.
\end{proof}

For the remaining fragments we will prove undecidability by reducing the tiling problem to their satisfiability problem. Let us start by recalling the tiling problem of the infinite grid $\mathbb{N} \times \mathbb{N}$. A tile is a mapping $t:\{R,L,T,B\} \to C$, where $C$ is a countably infinite set of colors. We let $t_X$ denote $t(X)$. Intuitively, $t_R,t_L,t_T$ and $t_B$ correspond  to the colors of the right, left, top and bottom edges of a tile. Now, let $\mathbb{T}$ be a finite set of tiles. A $\mathbb{T}$-tiling of $\mathbb{N} \times \mathbb{N}$ is a function $f:\mathbb{N} \times \mathbb{N} \to \mathbb{T}$ such that for all $i,j \in \mathbb{N}$, we have that $t_R = t_L'$ when $f(i,j) = t$ and $f(i+1,j) = t'$, and similarly, $t_T = t_B'$ when $f(i,j) = t$ and $f(i,j+1) = t'$. The tiling problem for the grid $\mathbb{N} \times \mathbb{N}$ asks, with the input of a finite set of $\mathbb{T}$ of tiles, if there exists a $\mathbb{T}$-tiling of $\mathbb{N} \times \mathbb{N}$. It is well-known that this problem is $\Pi_1^0$-complete.

We will start by proving that adding cyclic permutation to one-dimensional fluted logic leads to an undecidable logic $\mathrm{GRA}(p,\neg,\Dot{\cap},\exists_1,\exists_0)$. Define the standard grid $\mathfrak{G}_\mathbb{N} := (\mathbb{N} \times \mathbb{N},R,U)$ where we have $R = \{((i,j),(i+1,j)) \mid i,j \in \mathbb{N}\}$ and $U = \{((i,j),(i,j+1)) \mid i,j \in \mathbb{N}\}$.

Consider the extended vocabulary $\{R,U,L,D,S\}$, where $L$ and $D$ are binary relation symbols while $S$ is a quaternary relation. Define
\[\varphi_{successor} := \forall v_1 (\exists v_2 R(v_1,v_2) \land \exists v_2 U(v_1,v_2))\]
\[\varphi_{inverses} := \forall v_1 \forall v_2 (R(v_1,v_2) \leftrightarrow L(v_2,v_1)) \land \forall v_1 \forall v_2 (U(v_1,v_2) \leftrightarrow D(v_2,v_1))\]
\[\varphi_{cycle} := \forall v_1 \forall v_2 \forall v_3 \forall v_4 [(L(v_1,v_2) \land U(v_2,v_3) \land R(v_3,v_4)) \to S(v_1,v_2,v_3,v_4)] \]
\[\varphi_{completion} := \forall v_1 \forall v_2 \forall v_3 \forall v_4 (S(v_1,v_2,v_3,v_4) \to D(v_4,v_1))\]
Define $\Gamma := \varphi_{successor} \land \varphi_{inverses} \land \varphi_{cycle} \land \varphi_{completion}$. The intended model of $\Gamma$ is the standard grid $\mathfrak{G}_\mathbb{N}$ extended with two binary relations, $L$ pointing left and $D$ pointing down, together with a one quaternary relation $S$ which contains all the cycles.

\begin{lemma}
    Let $\mathfrak{G}$ be a structure of the vocabulary $\{R,U,L,D,S\}$. Suppose $\mathfrak{G}$ satisfies $\Gamma$. Then there exists a homomorphism from $\mathfrak{G}_\mathbb{N}$ to $\mathfrak{G} \upharpoonright \{R,U\}$, i.e. to the restriction of $\mathfrak{G}$ to the vocabulary $\{R,U\}$.
\end{lemma}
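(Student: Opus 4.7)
The plan is to construct a homomorphism $h : \mathbb{N}\times\mathbb{N} \to G$ by induction on $i+j$, using $\varphi_{successor}$ to create successors along the bottom row and the left column, and using the conjunction of $\varphi_{inverses}$, $\varphi_{cycle}$ and $\varphi_{completion}$ to guarantee that the square closes up at every inner grid point.

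First I would pick an arbitrary element $h(0,0) \in G$ and extend $h$ along the two boundary rays: using $\varphi_{successor}$, choose $h(i+1,0)$ to be any $R$-successor of $h(i,0)$, and $h(0,j+1)$ to be any $U$-successor of $h(0,j)$. Then proceed by induction on $i+j$: assuming $h(i,j)$, $h(i+1,j)$ and $h(i,j+1)$ are already defined with $R(h(i,j), h(i+1,j))$ and $U(h(i,j), h(i,j+1))$ holding in $\mathfrak{G}$, use $\varphi_{successor}$ once more to pick $h(i+1,j+1)$ to be some $R$-successor of $h(i,j+1)$, so that $R(h(i,j+1), h(i+1,j+1))$ is automatic.

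The main obstacle is verifying the complementary vertical edge $U(h(i+1,j), h(i+1,j+1))$, and this is exactly what the remaining conjuncts of $\Gamma$ are designed to produce. I would first use $\varphi_{inverses}$ to rewrite $R(h(i,j), h(i+1,j))$ as $L(h(i+1,j), h(i,j))$; the tuple $(h(i+1,j), h(i,j), h(i,j+1), h(i+1,j+1))$ then satisfies the antecedent of $\varphi_{cycle}$, so it lies in $S^{\mathfrak{G}}$. Applying $\varphi_{completion}$ yields $D(h(i+1,j+1), h(i+1,j))$, and one further appeal to $\varphi_{inverses}$ gives the required $U(h(i+1,j), h(i+1,j+1))$. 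The crucial procedural point is that at each corner one must commit to $h(i+1,j+1)$ using only one of the two available successor moves and then \emph{derive} the other edge from the commutative-square axiom, rather than trying to pick both an $R$-successor of $h(i,j+1)$ and a $U$-successor of $h(i+1,j)$ independently and then hoping to reconcile them.
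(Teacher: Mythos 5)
Your proof is correct and follows essentially the same route as the paper: the chain inverses--cycle--completion--inverses that you apply at each corner is exactly the paper's derivation of the auxiliary sentence $\varphi_{grid-like} := \forall v_1 \forall v_2 \forall v_3 \forall v_4 [(R(v_1,v_2) \land U(v_1,v_3) \land R(v_3,v_4)) \to U(v_2,v_4)]$, which the paper then combines with $\varphi_{successor}$ in the same inductive construction. You merely inline that derivation and spell out the induction on $i+j$ in more detail, which is fine.
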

\begin{proof}
    Since $\mathfrak{G}$ satisfies $\varphi_{inverses}, \varphi_{cycle}$ and $\varphi_{completion}$, it is clear that $\mathfrak{G}$ also satisfies the sentence
    \[\varphi_{grid-like} := \forall v_1 \forall v_2 \forall v_3 \forall v_4 [(R(v_1,v_2) \land U(v_1,v_3) \land R(v_3,v_4)) \to U(v_2,v_4)].\]
    Using this sentence together with $\varphi_{successor}$, it is easy to inductively construct a homomorphism from $\mathfrak{G}_\mathbb{N}$ to $\mathfrak{G} \upharpoonright \{R,U\}$.
\end{proof}

\begin{theorem}
    The satisfiability problem for $\mathrm{GRA}(p,\neg,\Dot{\cap},\exists_1,\exists_0)$ is $\Pi_1^0$-complete.
\end{theorem}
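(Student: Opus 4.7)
The upper bound is immediate: $\mathrm{GRA}(p,\neg,\Dot{\cap},\exists_1,\exists_0)$ is a fragment of first-order logic, whose satisfiability problem lies in $\Pi_1^0$ by G\"odel's completeness theorem. For the lower bound the plan is to reduce $\mathbb{T}$-tileability of $\mathbb{N}\times\mathbb{N}$ to satisfiability in our algebra. Given a finite tile set $\mathbb{T}$, I would output a $0$-ary term $\mathcal{T}(\mathbb{T})$ over the vocabulary $\{R,U,L,D,S\}\cup\{P_t\mid t\in\mathbb{T}\}$ (each $P_t$ unary) obtained by intersecting, via $\Dot{\cap}$, an algebraic rendering of $\Gamma$ with the usual tiling axioms: $\forall_0(\bigcup_t P_t)$, $\neg\exists_0(P_t\,\Dot{\cap}\,P_{t'})$ for $t\neq t'$, and, for each colour-mismatched pair $(t,t')$, a $0$-ary term forbidding $R$- or $U$-adjacency between $P_t$ and $P_{t'}$. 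Both directions of the equivalence are then standard: a tiling gives a model of $\mathcal{T}(\mathbb{T})$ on $\mathfrak{G}_{\mathbb{N}}$ by taking $L,D$ as the inverses of $R,U$ and $S$ as the set of $LURD$-cycles; conversely, by the preceding lemma any model of $\Gamma$ admits a homomorphism from $\mathfrak{G}_{\mathbb{N}}$, and pulling back the $P_t$-labels along it yields a valid $\mathbb{T}$-tiling.

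The substantive task is translating $\Gamma$ into the algebra. Easy cases first: $\varphi_{successor}$ becomes the $0$-ary term $\forall_0(\exists_1 R\,\Dot{\cap}\,\exists_1 U)$; $\varphi_{inverses}$ uses that on binary relations $p$ acts as transposition, so each biconditional becomes a pair of $0$-ary inclusions such as $\forall_0(\neg R\cup pL)$ and $\forall_0(R\cup\neg pL)$, with $\cup$ defined from $\neg$ and $\Dot{\cap}$ at matching arity; and the $R$-mismatch axiom for $(t,t')$ can be written as $\neg\exists_0\bigl(p(R\,\Dot{\cap}\,P_{t'})\,\Dot{\cap}\,P_t\bigr)$, whose $2$-ary body is equivalent to $P_t(v_2)\wedge R(v_2,v_1)\wedge P_{t'}(v_1)$.

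The delicate cases are $\varphi_{cycle}$ and $\varphi_{completion}$, since they mention the $4$-ary $S$ together with binary conjuncts living on pairs of positions that are \emph{not} the suffix $(v_3,v_4)$ naturally picked out by $\Dot{\cap}$. The key observation is that because $p$ cyclically rotates positional variables, every \emph{cyclically consecutive} pair of positions, that is every pair in $\{(v_1,v_2),(v_2,v_3),(v_3,v_4),(v_4,v_1)\}$, can be brought into suffix position by an appropriate power of $p$, after which $\Dot{\cap}$ with a binary term grafts the desired conjunct onto it. Writing $\top_4:=\neg(\neg S\,\Dot{\cap}\,S)$ for the $4$-ary top, I would build $L(v_1,v_2)\wedge U(v_2,v_3)\wedge R(v_3,v_4)$ inductively as $\mathcal{P}_0:=\top_4\,\Dot{\cap}\,L$, $\mathcal{P}_1:=(p\mathcal{P}_0)\,\Dot{\cap}\,U$, $\mathcal{P}_2:=(p\mathcal{P}_1)\,\Dot{\cap}\,R$, yielding $\varphi_{cycle}=\neg\exists_0(\mathcal{P}_2\,\Dot{\cap}\,\neg S)$. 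Similarly the non-suffix conjunct $\neg D(v_4,v_1)$ needed in $\varphi_{completion}$ is captured by $p^3(\top_4\,\Dot{\cap}\,\neg D)$, giving $\varphi_{completion}=\neg\exists_0\bigl(S\,\Dot{\cap}\,p^3(\top_4\,\Dot{\cap}\,\neg D)\bigr)$. The conceptual hurdle is exactly in noticing that the cyclic structure of the atomic pairs appearing in $\Gamma$ is perfectly matched by the rotate-then-append expressive power of $p$ together with $\Dot{\cap}$, with no need for a general swap or full permutation; once that is in place the reduction is effective, and $\Pi_1^0$-hardness of tileability transfers to satisfiability in $\mathrm{GRA}(p,\neg,\Dot{\cap},\exists_1,\exists_0)$.
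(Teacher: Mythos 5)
Your proposal is correct and follows essentially the same route as the paper: reduce tiling via the sentence $\Gamma$ over $\{R,U,L,D,S\}$ (using the grid-homomorphism lemma), and express $\varphi_{cycle}$ and $\varphi_{completion}$ in the algebra by using powers of $p$ to rotate each cyclically consecutive pair of positions into suffix position so that $\Dot{\cap}$ can attach the binary conjuncts. Your concrete terms (e.g.\ $\neg\exists_0(\mathcal{P}_2\,\Dot{\cap}\,\neg S)$ and $\neg\exists_0(S\,\Dot{\cap}\,p^3(\top_4\,\Dot{\cap}\,\neg D))$) check out and differ from the paper's only cosmetically, and you additionally spell out the tiling axioms and the $\Pi^0_1$ upper bound, which the paper treats as routine.
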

\begin{proof}
    It suffices to show that $\Gamma$ can be expressed in $\mathrm{GRA}(p,\neg,\Dot{\cap},\exists_1,\exists_0)$, since it is routine to construct an encoding of the tiling problem using the sentence $\Gamma$ (see for example \cite{preprintofthis2}). Clearly $\varphi_{successor}$ and $\varphi_{inverses}$ can be expressed in $\mathrm{GRA}(p,\neg,\Dot{\cap},\exists_1,\exists_0)$. The formula $\varphi_{cycle}$ can be expressed as the term
    \[\forall_0 pp(\neg L \Dot{\cup} p(\neg U \Dot{\cup} p(\neg R \Dot{\cup} S))\]
    while $\varphi_{completion}$ can be expressed as the term
    \[\forall_0 p(ppp \neg S \Dot{\cup} D).\]
    Thus $\Gamma$ can be expressed in $\mathrm{GRA}(p,\neg,\Dot{\cap},\exists_1,\exists_0)$.
\end{proof}

We will then prove that adding the operator $s$ to the fluted logic leads to an undecidable logic. Let $\mathbb{T}$ be a set of tiles. Consider the following vocabulary
\[\{P_t \mid t\in \mathbb{T}\} \cup \{T_t^H \mid t\in \mathbb{T}\} \cup \{T_t^V \mid t \in \mathbb{T}\},\]
where each $P_t$ is a binary relational symbol, while $T_S$ and $T_t^H$ are ternary relational symbols. Define
\[\varphi_1 := \forall v_1 \forall v_2 \bigvee_{t\in \mathbb{T}} P_t(v_1,v_2)\]
\[\varphi_2 := \forall v_1 \forall v_2 \bigwedge_{t\neq t'} \neg (P_t (v_1,v_2) \land P_{t'} (v_1,v_2))\]
\[\varphi_3 := \forall v_1 \exists v_2 \forall v_3 (\bigvee_{t_R = t_L'} (T_t^H(v_1,v_2,v_3) \land P_{t'}(v_2,v_3)) \land \bigvee_{t_T = t_D'} (T_t^V(v_1,v_2,v_3) \land P_{t'}(v_3,v_2)))\]
\[\varphi_4 := \forall v_1 \forall v_3 \bigwedge_{t\in \mathbb{T}} (\neg \exists v_2 T_t^H(v_1,v_2,v_3) \lor P_t(v_1,v_3))\]
\[\varphi_5 := \forall v_1 \forall v_3 \bigwedge_{t\in \mathbb{T}} (\neg \exists v_2 T_t^V(v_1,v_2,v_3) \lor P_t(v_3,v_1))\]
Define $\Gamma_{\mathbb{T}} := \varphi_1 \land \varphi_2 \land \varphi_3 \land \varphi_4 \land \varphi_5$. We remark here that this encoding of the tiling problem is essentially the same as the one used in \cite{borger97} to prove that the Kahr-fragment is a conservative reduction class. For completeness we will sketch a proof for the fact that the encoding is correct.

\begin{lemma}
    $\Gamma_{\mathbb{T}}$ is satisfiable if and only if $\mathbb{T}$ tiles $\mathbb{N} \times \mathbb{N}$.
\end{lemma}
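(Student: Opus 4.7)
The plan is to prove both directions separately, with the only nontrivial observation being how the single existential witness in $\varphi_3$ can simultaneously serve as a horizontal and a vertical successor.

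For the forward direction, suppose $\mathbb{T}$ tiles $\mathbb{N}\times\mathbb{N}$ via $f$. I would build a model $\mathfrak{A}$ with domain $\mathbb{N}$ by setting $P_t^{\mathfrak{A}} = \{(i,j) \mid f(i,j)=t\}$, $(T_t^H)^{\mathfrak{A}} = \{(i,i+1,j) \mid f(i,j) = t\}$, and $(T_t^V)^{\mathfrak{A}} = \{(i,i+1,j) \mid f(j,i) = t\}$. The formulas $\varphi_1$ and $\varphi_2$ are immediate from the totality and functionality of $f$. For $\varphi_3$, given $v_1 = i$, I would take $v_2 = i+1$; then for arbitrary $v_3 = k$ the horizontal disjunct is witnessed by $t = f(i,k)$ and $t' = f(i+1,k)$, which agree on the shared right/left edge because $f$ is a tiling, and symmetrically for the vertical disjunct. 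The implications $\varphi_4$ and $\varphi_5$ hold by the definitions of $(T_t^H)^{\mathfrak{A}}$ and $(T_t^V)^{\mathfrak{A}}$.

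For the reverse direction, suppose $\mathfrak{A} \models \Gamma_{\mathbb{T}}$. I would use $\varphi_3$ to Skolemize, obtaining a function $g : A \to A$ assigning to each $a$ a witness $g(a)$ playing the role of $v_2$. Fixing any $a_0 \in A$, let $a_n := g^n(a_0)$ and define a candidate tiling $f:\mathbb{N}\times\mathbb{N}\to\mathbb{T}$ by letting $f(i,j)$ be the unique $t\in\mathbb{T}$ with $\mathfrak{A}\models P_t(a_i,a_j)$; existence and uniqueness come from $\varphi_1$ and $\varphi_2$. The horizontal compatibility between $f(i,j)$ and $f(i+1,j)$ would follow by instantiating $\varphi_3$ at $v_1 = a_i$, $v_2 = a_{i+1}$, $v_3 = a_j$: this yields tiles $t,t'$ with $t_R = t_L'$ such that $T_t^H(a_i,a_{i+1},a_j)$ and $P_{t'}(a_{i+1},a_j)$ hold, and $\varphi_4$ then forces $P_t(a_i,a_j)$, so that $t = f(i,j)$ and $t' = f(i+1,j)$. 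The vertical compatibility between $f(i,j)$ and $f(i,j+1)$ would follow analogously by instantiating $\varphi_3$ at $v_1 = a_j$, $v_2 = a_{j+1}$, $v_3 = a_i$, yielding tiles $s, s'$ with $s_T = s_D'$ such that $T_s^V(a_j,a_{j+1},a_i)$ and $P_{s'}(a_i,a_{j+1})$ hold, and using $\varphi_5$ to deduce $P_s(a_i,a_j)$, so that $s = f(i,j)$ and $s' = f(i,j+1)$.

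The only conceptually subtle step is recognizing that a single Skolem function $g$ is enough to simulate both horizontal and vertical successors at once. This works because $\varphi_3$ demands its witness $v_2$ to act as a horizontal successor of $v_1$ (via the $T_t^H$ branch, pulled back onto $P_t$ by $\varphi_4$) and also as a vertical successor of $v_1$ (via the $T_t^V$ branch, pulled back onto $P_t$ by $\varphi_5$). Consequently the single sequence $a_0, a_1, a_2, \dots$ obtained by iterating $g$ can simultaneously index both coordinate axes, so that the pair $(a_i, a_j)$ cleanly encodes the grid cell $(i,j)$. No other obstacles are expected; the remaining verifications are unwindings of definitions.
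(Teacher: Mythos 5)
Your proposal is correct and follows essentially the same route as the paper: for the hard direction the paper also Skolemizes $\varphi_3$ into a single function $S$, iterates it from an arbitrary element, and reads off the tiling via $\sigma(n,m)=t \iff (S^n(a),S^m(a))\in P_t^{\mathfrak{A}}$, with $\varphi_1,\varphi_2$ giving well-definedness and $\varphi_3,\varphi_4,\varphi_5$ giving the adjacency constraints. You merely spell out the details the paper leaves as ``straightforward'' (the explicit model for the easy direction and the pull-back of $T_t^H,T_t^V$ onto $P_t$ via $\varphi_4,\varphi_5$), and these verifications are accurate.
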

\begin{proof}
    The direction from right to left is clear, so we will focus on the direction from left to right. Suppose that there exists a model $\mathfrak{A}$ so that $\mathfrak{A} \models \Gamma_{\mathbb{T}}$. Let $a\in A$ be an arbitrary element. Since $\mathfrak{A} \models \varphi_3$, we know that there exists a function $S:A\to A$ with the property that for every $a,b\in A$, $(a,S(a),b)$ belongs to the interpretation of the quantifier-free part of $\varphi_3$. Let $S^n(a)$ denote the element obtained from $a$ by applying $S$ to it $n$-times. We can now define a tiling $\sigma : \mathbb{N} \times \mathbb{N} \to \mathbb{T}$ by setting that $\sigma(n,m) = t$ if and only if $(S^n(a),S^m(a)) \in P_t^\mathfrak{A}$. Note that $\varphi_1$ and $\varphi_2$ guarantee that this is a well-defined function. Furthermore it is straightforward to check that the sentences $\varphi_3, \varphi_4$ and $\varphi_5$ guarantee that this is a valid tiling.
\end{proof}

\begin{theorem}
    The satisfiability problem for $\mathrm{GRA}(s,\neg,\Dot{\cap},\exists)$ is $\Pi_1^0$-complete.
\end{theorem}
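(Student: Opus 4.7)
The strategy is to reduce the tiling problem for $\mathbb{N}\times\mathbb{N}$, which is $\Pi_1^0$-complete, to the satisfiability problem for $\mathrm{GRA}(s,\neg,\Dot{\cap},\exists)$. By the preceding lemma it suffices to express the sentence $\Gamma_{\mathbb{T}}$ as a $0$-ary term in this algebra. The $\Pi_1^0$ upper bound is immediate because $\mathrm{GRA}(s,\neg,\Dot{\cap},\exists)$ is a fragment of $\mathrm{FO}$, whose satisfiability problem is $\Pi_1^0$.

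The conjuncts $\varphi_1$ and $\varphi_2$ translate directly, since the predicates $P_t, P_{t'}$ already share the suffix $(v_1,v_2)$: we simply take $\mathcal{P}_1 = \forall\forall \Dot{\bigcup_{t}} P_t$ and $\mathcal{P}_2 = \forall\forall \Dot{\bigcap_{t\neq t'}} \neg(P_t \Dot{\cap} P_{t'})$. The interesting cases are $\varphi_3, \varphi_4, \varphi_5$, each of which requires the swap operator $s$ to re-align variables. For $\varphi_3$, the conjunct $T_t^V(v_1,v_2,v_3) \land P_{t'}(v_3,v_2)$ involves $P_{t'}$ with its arguments reversed; since $sP_{t'}$ semantically represents the binary relation $\{(a,b) : P_{t'}(b,a)\}$, the suffix intersection $T_t^V \Dot{\cap} sP_{t'}$ expresses exactly this conjunction. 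Hence
\[\mathcal{P}_3 := \forall \exists \forall \Bigl( \Dot{\bigcup_{t_R = t'_L}} (T_t^H \Dot{\cap} P_{t'}) \;\Dot{\cap}\; \Dot{\bigcup_{t_T = t'_D}} (T_t^V \Dot{\cap} s P_{t'}) \Bigr).\]

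For $\varphi_4$ and $\varphi_5$, the subformula $\exists v_2 T_t^H(v_1,v_2,v_3)$ projects away the middle variable rather than the last one, so $\exists$ alone does not suffice. Applying $s$ first turns the ternary relation $T_t^H(v_1,v_2,v_3)$ into $T_t^H(v_1,v_3,v_2)$, whose last coordinate is now the one to be eliminated; hence $\exists s T_t^H$ is the desired binary relation corresponding to the tuple $(v_1,v_3)$. This yields
\[\mathcal{P}_4 := \Dot{\bigcap_{t}} \forall\forall (\neg \exists s T_t^H \Dot{\cup} P_t), \qquad \mathcal{P}_5 := \Dot{\bigcap_{t}} \forall\forall (\neg \exists s T_t^V \Dot{\cup} s P_t),\]
where the outer $s$ in $\mathcal{P}_5$ is needed to turn $P_t(v_1,v_3)$ into the required $P_t(v_3,v_1)$. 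Taking $\mathcal{T} := \mathcal{P}_1 \Dot{\cap} \mathcal{P}_2 \Dot{\cap} \mathcal{P}_3 \Dot{\cap} \mathcal{P}_4 \Dot{\cap} \mathcal{P}_5$ then produces a $0$-ary term equivalent to $\Gamma_{\mathbb{T}}$, finishing the reduction.

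The only real obstacle is bookkeeping: one must carefully track how the interaction of $s$, $\Dot{\cap}$ and $\exists$ affects the arity and free-variable positions of intermediate terms, and verify that each sub-translation yields exactly the intended formula on the correct tuple. Given the low nesting depth of $\Gamma_{\mathbb{T}}$ this is straightforward, and once checked the $\Pi_1^0$-completeness of the satisfiability problem for $\mathrm{GRA}(s,\neg,\Dot{\cap},\exists)$ follows.
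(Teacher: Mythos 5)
Your proposal is correct and follows essentially the same route as the paper: reduce the $\Pi_1^0$-complete tiling problem via the lemma on $\Gamma_{\mathbb{T}}$ and translate each conjunct into $\mathrm{GRA}(s,\neg,\Dot{\cap},\exists)$, with terms for $\varphi_3,\varphi_4,\varphi_5$ that match the paper's almost symbol for symbol (and your $\mathcal{P}_5$ even cleans up a stray repeated $\forall\forall$ in the paper's displayed term). The upper-bound remark (fragment of $\mathrm{FO}$, hence $\Pi_1^0$) is also the intended one.
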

\begin{proof}
    It suffices to show that $\Gamma_{\mathbb{T}}$ can be expressed in $\mathrm{GRA}(s,\neg,\Dot{\cap},\exists)$. First we note that clearly $\varphi_1$ and $\varphi_2$ can be expressed in $\mathrm{GRA}(s,\neg,\Dot{\cap},\exists)$. For $\varphi_3$ we can use the following term
    \[\forall \exists \forall (\Dot{\bigcup_{t_R = t_L'}}(P_{t'} \Dot{\cap} T_t^H) \Dot{\cap} \Dot{\bigcup_{t_T = t_D'}} (s P_{t'} \Dot{\cap} T_t^V)),\]
    while for $\varphi_4$ and $\varphi_5$ we can use the following terms
    \[\forall \forall \Dot{\bigcap_{t}} (\neg \exists s T_t^H \Dot{\cup} P_t)\]
    \[\forall \forall \Dot{\bigcap_{t}} \forall \forall (\neg \exists s T_t^V \Dot{\cup} s P_t).\]
    Thus $\Gamma$ can be expressed in $\mathrm{GRA}(s,\neg,\Dot{\cap},\exists)$.
\end{proof}

We note that the above proof is optimal in the sense that we used only relational symbols of arity at most three, while it is known that $\mathrm{GRA}(s,\neg,\Dot{\cap},\exists)$ is decidable over vocabularies with at most binary relational symbols, since it can be translated to $\mathrm{FO}^2$ \cite{preprintofthis2}. Another thing to note is that if we were to introduce a \emph{two-dimensional} intersection, which would be defined analogously with the one-dimensional intersection, then adding that operator together with $C$ to $\mathrm{GRA}(s,\neg,\cap,\exists)$ would result in an undecidable logic.

\section{Conclusion}

In this article we have studied in detail how the complexities of various ordered fragments of first-order logic change if we modify slightly the underlying syntax. The general picture that emerges is that even if we relax only slightly the restrictions on the syntax, the complexity of the logic can increase drastically. On the other hand, we have seen that adding the further restriction of one-dimensionality on the logics can greatly decrease the complexity of the logic.

There are several directions in which the work conducted in this article can be continued. Perhaps the most immediate problem is whether the extension of ordered logic $\mathrm{GRA}(s,E,\neg,C,\cap,\exists)$ is decidable. As we have seen, this logic does not have the finite model property, and thus we can't expect that traditional model building techniques can be used to prove that it is decidable. On the other hand, we have not been able to prove that this logic is undecidable using standard tiling arguments.

Another direction for future research is to try to identify possible applications for the ordered logics. For now the main motivation for their study has been based on purely theoretical reasons, but in principle these logics should provide possibilities for new applications of decidable fragments of first-order logic, since their expressive power is orthogonal to the expressive power of other well-studied fragments of first-order logic.

\section*{Acknowledgement}

The research leading to this work was supported by the Academy of Finland grants 324435 and 328987. The author also wishes to thank Antti Kuusisto for many helpful discussions on fragments of first-order logic.

\bibliographystyle{plainurl}
\bibliography{kirjasto}

\end{document}